\newcommand{\ddim}{d}
\newcommand{\floor}[1]{\ensuremath{\left\lfloor#1\right\rfloor}}
\newcounter{magicrownumbers}
\definecolor{Darkblue}{rgb}{0,0,.8}
\definecolor{Brown}{cmyk}{0,0.61,1.,0.60}
\definecolor{Purple}{cmyk}{0.45,0.86,0,0}
\definecolor{Darkgreen}{rgb}{0.133,0.500,0.133}
\definecolor{MyGreen}{rgb}{0.200,0.500,0.200}
\renewcommand{\emph}[1]{{\color{MyGreen}{\em #1}}}
\newcommand{\namedref}[2]{\hyperref[#2]{#1~\ref*{#2}}}
\newcommand{\propref}[1]{\hyperref[#1]{property~(\ref*{#1})}}
\newcommand{\eps}{\varepsilon}
\newcommand{\IGNORE}[1]{}
\newtheorem*{theorem*}{Theorem}
\newtheorem{theorem}{Theorem}[section]
\newtheorem{lemma}[theorem]{Lemma}
\newtheorem{definition}[theorem]{Definition}
\newtheorem{claim}[theorem]{Claim}
\newtheorem*{question*}{Question}
\newtheorem*{conjecture*}{Conjecture}
\newtheorem{question}{Question}
\newcommand{\old}[1]{{}}
\title{Light Spanners with Small Hop-Diameter}
\date{}
 \author{Sujoy Bhore\thanks{Department of Computer Science \& Engineering, Indian Institute of Technology Bombay, Mumbai, India.\\ Email: \href{sujoy@cse.iitb.ac.in}{sujoy@cse.iitb.ac.in}}
 \quad
 Lazar Milenkovi\'{c}\thanks{Tel-Aviv University. Email: \href{milenkovic.lazar@gmail.com}{milenkovic.lazar@gmail.com}}}
\begin{document}
\maketitle

\begin{abstract}
\emph{Lightness}, \emph{sparsity}, and \emph{hop-diameter} are the fundamental parameters of geometric spanners. Arya et al. [STOC'95] showed in their seminal work that there exists a construction of Euclidean $(1+\varepsilon)$-spanners with hop-diameter $O(\log n)$ and lightness $O(\log n)$. They also gave a general tradeoff of hop-diameter $k$ and sparsity $O(\alpha_k(n))$, where $\alpha_k$ is a very slowly growing inverse of an Ackermann-style function. The former combination of logarithmic hop-diameter and lightness is optimal due to the lower bound by Dinitz et al. [FOCS'08]. Later, Elkin and Solomon [STOC'13] generalized the light spanner construction to doubling metrics and extended the tradeoff for more values of hop-diameter $k$. In a recent line of work [SoCG'22, SoCG'23], Le et al. proved that the aforementioned tradeoff between the hop-diameter and sparsity is tight for every choice of hop-diameter $k$. A fundamental question remains: \textit{What is the optimal tradeoff between the hop-diameter and lightness for every value of $k$}?

In this paper, we present a general framework for constructing light spanners with small hop-diameter. Our framework is based on \emph{tree covers}. In particular, we show that if a metric admits a tree cover with $\gamma$ trees, stretch $t$, and lightness $L$, then it also admits a $t$-spanner with hop-diameter $k$ and lightness $O(kn^{2/k}\cdot \gamma L)$. Further, we note that the tradeoff for trees is tight due to a construction in uniform line metric, which is perhaps the simplest tree metric. As a direct consequence of this framework, we obtain a tight tradeoff between lightness and hop-diameter for doubling metrics in the entire regime of $k$.
\end{abstract}

\section{Introduction}
Let $M_X = (X, \delta_X)$ be a finite metric space, which can be viewed as a complete graph with vertex set $X$, where the weight of each edge $(u,v) \in \binom{X}{2}$ is equal to the metric distance between its endpoints, $\delta_X(u,v)$. Let $t \ge 1$ be a real parameter and let $H = (X,E)$ be a subgraph of $M_X$ such that $E \subseteq \binom{X}{2}$.
We say that $H$ is a \emph{$t$-spanner} for $M_X$, if for every two points $u$ and $v$ in $X$, it holds that $\delta_H(u,v) \le t \cdot \delta_X(u,v)$, where $\delta_H(u,v)$ denotes the length of the shortest path between $u$ and $v$ in $H$. Such a path is called $t$-spanner path and parameter $t\ge 1$ is called the \emph{stretch factor} (shortly, \emph{stretch}) of $H$.

Spanners for Euclidean metric spaces (henceforth Euclidean spanners) are fundamental geometric structures with numerous applications, such as topology control in wireless networks~\cite{schindelhauer2007geometric}, efficient regression in metric spaces~\cite{gottlieb2017efficient},
approximate distance oracles~\cite{gudmundsson2008approximate}, and more.
Rao and Smith~\cite{rao1998approximating} showed the relevance of Euclidean spanners in the context of other geometric \textsf{NP}-hard problems, e.g., Euclidean traveling salesman problem and Euclidean minimum Steiner tree problem.
Intensive ongoing research is dedicated to exploring diverse properties of Euclidean spanners; see~\cite{althofer1993sparse, arya1995euclidean, BhoreKK0LPT25, BhoreT21a, BhoreT21b, BhoreT22, BCHZST25, chan2009small, das1995new, gudmundsson2002fast, keil1992classes, ES15, rao1998approximating, le2019truly}. In fact, several distinct constructions have been developed for Euclidean spanners over the years, such as
well-separated pair decomposition (WSPD) based spanners~\cite{callahan1993optimal, gudmundsson2002fast}, skip-list spanners~\cite{arya1994randomized},
path-greedy and gap-greedy spanners~\cite{althofer1993sparse, arya1997efficient}, and many more; each such construction found further applications in various geometric optimization problems. For an excellent survey of results and techniques on Euclidean spanners, we refer to the book titled ``Geometric Spanner Networks'' by Narasimhan and Smid~\cite{narasimhan2007geometric}, and the references therein.

Besides having a small stretch, perhaps the most basic property of a spanner is its \emph{sparsity}, defined as the number of edges in the spanner divided by $n-1$, which is the size of an MST of the underlying $n$-point metric.
Chew~\cite{Chew86} was the first to show that there exists an Euclidean spanner with constant sparsity and stretch $\sqrt{10}$. Later, Keil and Gutwin~\cite{keil1992classes} showed that the Delaunay triangulation is in fact a $2.42$-spanner with constant sparsity. Clarkson~\cite{Clarkson87} designed the first Euclidean $(1+\eps)$-spanner for $\mathbb{R}^2$ with sparsity $O(1/\eps)$, for an arbitrary small $\eps>0$; an alternative algorithm was presented by Keil~\cite{keil1988approximating}.
These two papers~\cite{Clarkson87, keil1988approximating} introduced the so-called $\Theta$-graph as a new tool for designing $(1+\eps)$-spanners with sparsity $O(1/\eps)$ in $\mathbb{R}^2$. Ruppert and Seidel~\cite{ruppert1991approximating} later generalized the $\Theta$-graph to any constant dimension $d$, showing that one can construct a $(1+\eps)$-spanner with sparsity $O(\eps^{-d+1})$. Recently, Le and Solomon~\cite{le2019truly} showed that this bound is tight. 

Another fundamental and extensively studied property of a spanner is its \emph{lightness}, 
defined as the ratio of the sum of the edge weights of the spanner to the weight of the MST of the underlying metric.
Das et al.~\cite{das1993optimally} showed that the ``greedy spanner'', introduced by Alth\"ofer et al.~\cite{althofer1993sparse}, has a constant lightness and stretch $1+\eps$ in $\mathbb{R}^3$, for any constant $\eps>0$. This was generalized later by Das et al.~\cite{narasimhan1995new} to $\mathbb{R}^d$, for all $d\in \mathbb{N}$. 
Rao and Smith~\cite{rao1998approximating} showed in their seminal work that the greedy spanner with stretch  $1+\eps$ has lightness $\eps^{-O(d)}$ in $\mathbb{R}^d$ for every constant $\eps$ and $d$. 
After a long line of work, finally in 2019, Le and Solomon~\cite{le2019truly} improved the lightness bound of the greedy spanner to $O(\eps^{-d}\log \eps^{-1})$ in $\mathbb{R}^d$. For metrics with doubling dimension\footnote{The doubling dimension of a metric space $(X, \delta_X)$ is the smallest value $\ddim$ such that every ball $B$ in the metric space can be covered by at most $2^{\ddim}$ balls of half the radius of $B$. A metric space is called doubling if its doubling dimension is constant.} 
$\ddim$, Borradaile et al.~\cite{borradaile2019greedy} showed that the greedy spanner with stretch $1+\eps$ has lightness $\eps^{-2\cdot \ddim}$ (see also~\cite{Gottlieb15} for an earlier work).

Besides having small \emph{stretch} and \emph{sparsity}, a spanner often possesses additional valuable properties of the underlying metric.
One such critical property is the \emph{hop-diameter}: a $t$-spanner for $M_X$ has hop-diameter of $k$ if, for any two points $u, v \in X$, there is a t-spanner path between $u$ and $v$ with at most $k$ edges (or \emph{hops}). Already in 1994, Arya et al.~\cite{arya1994randomized} proposed a construction of Euclidean $(1+\eps)$-spanners with logarithmic hop-diameter and a constant sparsity. In a subsequent work, Arya et al.~\cite{arya1995euclidean} showed that there exists a construction of Euclidean $(1+\varepsilon)$-spanners with hop-diameter $O(\log n)$ and lightness $O(\log n)$. The same paper gives a general tradeoff of hop-diameter $k$ and sparsity $O(\alpha_k(n))$ for $(1+\eps)$-spanners, where $\alpha_k$ is an extremely slowly growing inverse of an Ackermann-style function (see also \cite{BTS94,Sol13}). 
The former tradeoff of hop-diameter versus lightness is optimal due to the lower bound by Dinitz et al.~\cite{dinitz2010low}. 
Later, in 2015, Elkin and Solomon~\cite{ES15} presented a light $(1+\eps)$ spanner construction for doubling metrics and gave a tradeoff between the hop-diameter $k$ and lightness for more values of $k$.
Recent works by Le et al.~\cite{LeMS22, LeMS23} showed that the tradeoff of hop-diameter $k$ and sparsity $O(\alpha_k(n))$ is asymptotically optimal, for every value of $k \ge 1$.
However, despite a plethora of results on tradeoffs between \emph{lightness} and \emph{hop-diameter}, the following question remained open.

\begin{question}\label{q:tradeoff}
Given a set of points in $\mathbb{R}^d$, what is the optimal tradeoff between lightness and hop-diameter $k$, for every value of $k$?
\end{question}

A notion arguably stronger than that of a \emph{spanner} is a \emph{tree cover}. For a metric space $M_X = (X, \delta_X)$ let $T = (V_T, E_T)$ be a tree with $X \subseteq V_T$. We say that the tree $T$ is \emph{dominating}, if for every $u, v \in X$, it holds that $\delta_T(u,v) \ge \delta_X(u,v)$.
A \emph{tree cover} with \emph{stretch} $t$ is a collection of dominating trees such that for every pair of vertices $u,v \in X$, there exists a tree $T$ in the collection with $\delta_T(u,v) \le t \cdot \delta_G(u,v)$. The \emph{size} of a tree cover is the number of trees in it. The lightness of a tree in the cover is the ratio of its weight to the weight of an MST of the underlying metric space.
The \emph{lightness} of a tree cover is the maximum lightness among the trees in the cover. We use $L$-light $(\gamma, t)$-tree cover to denote a tree cover with lightness $L$, $\gamma$ trees, and stretch $t$. 
Clearly, the union of all the trees in a $L$-light $(\gamma,t)$-tree cover constitutes a $t$-spanner with sparsity bounded by $O(\gamma)$ and lightness bounded by $O(\gamma L)$.

The aforementioned tradeoff between hop-diameter and sparsity for Euclidean $(1+\eps)$-spanners by Arya et al.~\cite{arya1995euclidean} was in fact achieved via tree covers.
Their celebrated \emph{``Dumbbell Theorem''} demonstrated that in $\mathbb{R}^d$, any set of points admits a tree cover of stretch $1+\eps$ that uses only $O(\eps^{-d}\cdot \log(1/\eps))$ trees. Later, Bartal et el.~\cite{bartal2022covering}
generalized this theorem for doubling metrics. 
The bound on the tree cover size of \cite{arya1995euclidean} was recently improved by Chang et al.~\cite{ChangC0MST24} by a factor of $1/\eps$. This is tight up to a logarithmic factor in $(1/\eps)$ due to the lower bound on the sparse Euclidean spanners by Le and Solomon~\cite{le2019truly}. There are other tree cover constructions for doubling metrics~\cite{chan2016hierarchical,bartal2022covering}, and for other metrics, such as planar and minor-free~\cite{bartal2022covering,chang2023covering,CCLMST24,gupta2005traveling,KLMN04} and general~\cite{bartal2022covering,mendel2007ramsey,naor2012scale}.\footnote{Metric induced by a graph $G=(V,E)$ is a metric space with point set $V$, where for every $u,v \in V$, their distance in the metric is equal to the shortest path distance between $u$ and $v$ in $G$, denoted by $\delta_G(u,v)$.}

\subsection{Our contributions}
We present a general framework for constructing light spanners with small hop-diameter. Our starting point is the construction of a light $1$-spanner with a bounded hop-diameter for tree metrics. 
The bounds are summarized in the following theorem, with the proof presented in \Cref{sec:ub}.

\begin{restatable}{theorem}{ub}\label{thm:ub}
For every $n \ge 1$, every $k\ge 1$, and every metric $M_T$ induced by an $n$-vertex tree $T$, there is a $1$-spanner for $M_T$ with hop-diameter $k$ and lightness $O(kn^{2/k})$.
\end{restatable}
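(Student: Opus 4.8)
The plan is to build the required $1$-spanner by a recursion on $k$ that peels off one ``level'' of a balanced decomposition of $T$ per two hops. Two reductions simplify the target. First, $T$ is itself a minimum spanning tree of $M_T$ (every chord $(u,v)$ is a heaviest edge of the cycle it closes with $T$, since every edge of the $u$--$v$ path in $T$ has weight at most $\delta_T(u,v)$), so for any $1$-spanner $H\subseteq M_T$ the lightness equals $w(H)/w(T)=\bigl(\sum_{e\in T}w(e)\,\mathrm{cong}_H(e)\bigr)\big/\sum_{e\in T}w(e)$, a weighted average of the \emph{congestions} $\mathrm{cong}_H(e):=\#\{(u,v)\in H:\ e\text{ lies on the }u\text{--}v\text{ path in }T\}$, and hence at most $\max_e\mathrm{cong}_H(e)$. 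Second, since edge weights are metric distances we always have $\delta_H\ge\delta_T$, so being a $1$-spanner means every pair $u,v$ is joined by a \emph{monotone} path (intermediate vertices on the $u$--$v$ path of $T$, in order). Thus it suffices to build $H_k(T)$ with hop-diameter $k$ along monotone paths and with every tree edge of congestion $O(kn^{2/k})$. The base cases are: if $k\ge n-1$, return $T$ itself (congestion $1$); if $k=1$, return the complete graph (congestion $\le\lfloor n/2\rfloor\lceil n/2\rceil$); if $k=2$, use the \emph{centroid--star} construction---recursively pick a centroid $c$, add a star from $c$ to all vertices, recurse on the components of $T-c$ (each of size $\le n/2$)---which routes every pair through the first centroid on its $T$-path in $2$ hops and has lightness $\le\sum_{\ell}(n/2^{\ell})\,w(T)=O(n)=O(2\cdot n^{2/2})$.

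For $k\ge 3$ I would use a balanced separator: a set $\mathcal S\subseteq V(T)$ of $O(n^{1-2/k})$ vertices so that every component $C_i$ of $T-\mathcal S$ has $O(n^{2/k})$ vertices and is adjacent to $O(1)$ vertices of $\mathcal S$ (for the line metric, which is the tight case, $\mathcal S$ is just equally spaced cut vertices and each block $C_i$ has its two endpoints as portals). Let $H_k(T)$ be the union of: (a) a recursively built $H_k(C_i)$ inside each component (a subtree, so $\delta_{C_i}=\delta_T$ there); (b) for each component and each of its portals $p$, a star from $p$ to all of $C_i$; and (c) a recursively built $H_{k-2}(T_{\mathrm{coarse}})$, where $T_{\mathrm{coarse}}$ is the $O(n^{1-2/k})$-vertex tree obtained from $T$ by contracting each $C_i$ to the edge(s) joining its portals, weighted by the corresponding sub-distances. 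A pair inside one $C_i$ is routed by (a) in $\le k$ hops; a pair $u\in C_i,\ v\in C_j$ is routed $u\xrightarrow{\text{star}}p_i\xrightarrow{H_{k-2}(T_{\mathrm{coarse}})}p_j\xrightarrow{\text{star}}v$ in $1+(k-2)+1=k$ hops, where $p_i,p_j$ are the portals of $C_i,C_j$ lying on the $u$--$v$ path of $T$, which makes the path monotone.

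For the congestion, a tree edge receives $O(n^{2/k})$ from the $O(1)$ stars that can reach it, at most $\Lambda(n^{2/k},k)$ from part (a), and at most $\Lambda(O(n^{1-2/k}),k-2)$ from part (c) (routed through its component's contracted edge), where $\Lambda(n,k)$ denotes the bound being proved. This yields $\Lambda(n,k)\le\Lambda(n^{2/k},k)+\Lambda(O(n^{1-2/k}),k-2)+O(n^{2/k})$, which solves to $O(kn^{2/k})$: unrolling the first term produces the rapidly converging sum $O(k)\sum_{j\ge 0}n^{(2/k)^{j+1}}=O(kn^{2/k})$, while the middle term is $O(k-2)\cdot O\bigl(n^{(1-2/k)\cdot 2/(k-2)}\bigr)=O(k\cdot n^{2/k})$ by induction on $k$, using the identity $(1-\tfrac2k)\cdot\tfrac{2}{k-2}=\tfrac2k$; the $k\mapsto k-2$ chain terminates at the base cases $k\in\{1,2\}$ and the $k\mapsto k$ chain terminates once $n$ is constant.

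The geometric-series estimates and the base-case counts are routine. The two parts I expect to be the real obstacle are: (i) the tree decomposition, i.e.\ producing a separator of size $O(n^{1-2/k})$ whose removal leaves pieces of size $O(n^{2/k})$ with only $O(1)$ portals each, so that the contraction to $T_{\mathrm{coarse}}$ is again a tree metric on $O(n^{1-2/k})$ points with all inter-portal distances (and the monotonicity of the routed paths) preserved---this is exactly where general trees differ from the line; and (ii) the hop-budget bookkeeping, namely checking that entering and leaving a component each cost precisely one hop (which is what forces the portal stars in (b) rather than a cheaper recursive gadget) while the coarse instance loses exactly two from the budget, so that the recursion indeed yields total hop-diameter at most $k$.
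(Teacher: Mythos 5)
Your construction is the same as the paper's (\Cref{alg:ub}): the separator you hope for in obstacle (i) is exactly \Cref{lem:split} (cited from [FL22]), which gives a set $X$ of at most $\frac{2n}{\ell+1}-1$ vertices whose removal leaves components of size at most $\ell$ with at most two outgoing edges, and your parts (a)--(c) (recursion with budget $k$ inside components, portal stars, recursion with budget $k-2$ on the contracted portal tree) and your monotone routing/hop bookkeeping in obstacle (ii) match the paper's construction and its stretch/hop lemma. Your congestion reformulation of lightness (max congestion over MST edges, since $T$ is an MST of $M_T$ and every spanner edge realizes a tree distance) is a clean equivalent of the paper's weight recursion $W_k(n,L)\le 2\ell L+W_{k-2}\bigl(\tfrac{2n}{\ell+1},L\bigr)+\sum_i W_k(\ell,L_i)$, and the key identity $(1-\tfrac2k)\cdot\tfrac{2}{k-2}=\tfrac2k$ is indeed the heart of the $(k-2)$-term.

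The genuine gap is in the step you call routine: solving the recurrence with an absolute constant, uniformly in $k$ and $n$. Writing the target as $\Lambda(n,k)\le Ckn^{2/k}$, the $(k-2)$-term already consumes $C(k-2)n^{2/k}$, so all of the star cost \emph{and} the same-$k$ term $\Lambda\bigl(2n^{2/k},k\bigr)$ must fit into the slack $2Cn^{2/k}$; you cannot afford to lose even a constant factor per step of the $k$-induction, or the ``constant'' becomes $2^{\Theta(k)}$. Feeding the inductive bound into the same-$k$ term gives $Ck\,2^{2/k}n^{4/k^2}$, and the requirement $k\,2^{2/k}n^{4/k^2}\le (2-c_0/C)\,n^{2/k}$ holds only when $n\ge k^{\Omega(k)}$; in the window where $2n^{2/k}$ just exceeds $k$ (roughly $(k/2)^{k/2}<n\lesssim (k/2)^{k/2+1}$) the single inductive step fails for \emph{every} constant $C$, and your ``$O(k)\sum_{j\ge0}n^{(2/k)^{j+1}}=O(kn^{2/k})$'' bound, while true as an $O(\cdot)$ statement, is too lossy (a factor $2$ on the leading term) to close the induction. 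This regime $k<n\le k^{O(k)}$, i.e.\ $n^{2/k}\le\mathrm{poly}(k)$, is precisely where the paper changes the construction --- it sets $\ell=k$ when $n\le 2k^2$ (and $\ell=\floor{n^{2/3}}$ for $k=3$) --- proves auxiliary claims that $W_k=O(k)\cdot L$ for $n\le 2k^2$, and then splits the main induction into the three ranges $n<(k/2)^{k/2}$, $(k/2)^{k/2}\le n<k^k$, and $n\ge k^k$. Something of this sort (a modified parameter in the small-$n$ regime, or an explicit deeper unrolling with base-case accounting there) is needed; without it your analysis establishes the bound only for $n\ge k^{\Omega(k)}$, not the full claim.
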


In order to go from tree metrics to arbitrary metrics, we rely on tree cover theorems. This reduction is summarized in the following corollary, with the proof provided in \Cref{sec:reduction}.

\begin{restatable}{corollary}{reduction}\label{cor:reduction}
Let $n \ge 1$ be an arbitrary integer and let $M_X$ be an arbitrary metric space with $n$ points. If $M_X$ admits an $L$-light $(\gamma,t)$-tree cover, then for any $k\ge 1$, the metric space $M_X$ has a $t$-spanner with hop-diameter $k$ and lightness $O(\gamma\cdot L\cdot k\cdot n^{2/k})$.
\end{restatable}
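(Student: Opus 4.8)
The plan is to combine Theorem~\ref{thm:ub} with the given tree cover in the obvious way, and then carefully account for the hop-diameter and lightness. First I would invoke the hypothesis: $M_X$ admits an $L$-light $(\gamma,t)$-tree cover $\mathcal{T} = \{T_1,\dots,T_\gamma\}$. Each $T_i$ is a dominating tree whose underlying metric $M_{T_i}$ may have extra Steiner points (since we only require $X \subseteq V_{T_i}$), so $M_{T_i}$ is a tree metric on $|V_{T_i}|$ vertices; note $|V_{T_i}|$ could a priori exceed $n$, but standard tree-cover constructions can be assumed to have $|V_{T_i}| = O(n)$, and in any case one can suppress degree-two Steiner points so that the number of vertices is $O(n)$ without changing distances among $X$. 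I would state this reduction to $O(n)$ vertices as a preliminary normalization step.

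Next, for each tree $T_i$ I apply Theorem~\ref{thm:ub} with the same parameter $k$ to obtain a $1$-spanner $H_i$ for $M_{T_i}$ with hop-diameter $k$ and lightness (relative to the MST of $M_{T_i}$) at most $O(k\,|V_{T_i}|^{2/k}) = O(k\,n^{2/k})$. I then let $H = \bigcup_{i=1}^{\gamma} H_i$ (restricted to edges, keeping any Steiner vertices used). The stretch analysis: given $u,v \in X$, pick the tree $T_i$ with $\delta_{T_i}(u,v) \le t\cdot\delta_X(u,v)$; inside $H_i$ there is a path of at most $k$ hops whose length is exactly $\delta_{T_i}(u,v)$ (a $1$-spanner preserves distances), and every edge of $H_i$ has length equal to a distance in $M_{T_i}$, which dominates the corresponding distance in $M_X$ — wait, that is the wrong direction, so I must instead observe that the total length of the $H_i$-path equals $\delta_{T_i}(u,v) \le t\,\delta_X(u,v)$, which is all that is needed for stretch $t$; and it has $\le k$ hops, giving hop-diameter $k$. (If one insists on a spanner with vertex set exactly $X$, one shortcuts Steiner vertices, which only decreases hop count and length.)

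For the lightness of $H$, I bound $w(H) \le \sum_{i=1}^\gamma w(H_i)$. By Theorem~\ref{thm:ub}, $w(H_i) \le O(k\,n^{2/k})\cdot w(\mathrm{MST}(M_{T_i}))$. Since $T_i$ is a dominating tree, $\mathrm{MST}(M_{T_i}) \le w(T_i)$, and by the definition of an $L$-light tree cover, $w(T_i) \le L\cdot w(\mathrm{MST}(M_X))$. Hence $w(H_i) \le O(k\,n^{2/k}\cdot L)\cdot w(\mathrm{MST}(M_X))$, and summing over the $\gamma$ trees gives $w(H) \le O(\gamma\cdot L\cdot k\cdot n^{2/k})\cdot w(\mathrm{MST}(M_X))$, i.e.\ lightness $O(\gamma L k n^{2/k})$, as claimed.

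The main obstacle I anticipate is the bookkeeping around Steiner vertices: Theorem~\ref{thm:ub} is stated for a metric induced by an $n$-vertex tree, so I must be careful that (a) the number of vertices of $T_i$ is $O(n)$ so that $|V_{T_i}|^{2/k} = O(n^{2/k})$, and (b) the MST comparison $w(\mathrm{MST}(M_{T_i})) \le w(T_i) \le L\cdot w(\mathrm{MST}(M_X))$ is valid even in the presence of Steiner points — which it is, because the tree $T_i$ itself spans $V_{T_i}$ and dominates $M_X$-distances, and lightness of the cover is defined relative to $\mathrm{MST}(M_X)$. Everything else is a direct substitution.
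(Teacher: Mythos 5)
Your proposal is correct and follows essentially the same route as the paper: apply \Cref{thm:ub} to each tree of the cover, take the union, and combine the per-tree lightness bound $O(kn^{2/k})$ with the cover's lightness $L$ and size $\gamma$, while the stretch-$t$ and $k$-hop guarantees come from picking the tree that approximates the given pair. Your extra bookkeeping about Steiner vertices (suppressing degree-two points so each tree has $O(n)$ vertices) is a reasonable refinement that the paper's proof leaves implicit, but it does not change the argument.
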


We note that the reduction from \Cref{cor:reduction} holds for any metric space which admits a light tree cover. To exemplify the reduction, we focus on doubling metrics and provide a general tradeoff between hop-diameter and lightness for $(1+\eps)$-spanners in the following theorem. (See \Cref{sec:reduction} for the proof.)
\begin{restatable}{theorem}{doubling}\label{thm:doubling}
For every $k\ge 1$ and $\eps \in (0,1)$, every $n$-point metric space with doubling dimension $\ddim$ has a $(1+\eps)$-spanner with hop-diameter $k$ and lightness $O(\eps^{-O(\ddim)}\cdot kn^{2/k})$.
\end{restatable}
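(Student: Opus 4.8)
The plan is to derive \Cref{thm:doubling} by feeding a \emph{light} tree cover for doubling metrics into the black-box reduction of \Cref{cor:reduction}. Thus the whole proof reduces to establishing the following: every $n$-point metric space $M_X$ of doubling dimension $\ddim$ admits an $L$-light $(\gamma,1+\eps)$-tree cover with $\gamma=\eps^{-O(\ddim)}$ and $L=\eps^{-O(\ddim)}$. Granting this, \Cref{cor:reduction} immediately produces, for every $k\ge 1$, a $(1+\eps)$-spanner for $M_X$ with hop-diameter $k$ and lightness $O(\gamma\cdot L\cdot k\cdot n^{2/k})=O(\eps^{-O(\ddim)}\cdot kn^{2/k})$, which is exactly the claimed bound. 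So from this point on the goal is purely to construct the light tree cover.

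For the tree cover I would start from the ``Dumbbell Theorem'' of Arya et al.~\cite{arya1995euclidean} in its doubling-metric generalization due to Bartal et al.~\cite{bartal2022covering}: this already gives a collection of $\eps^{-O(\ddim)}$ dominating trees with stretch $1+\eps$ (possibly using $O(n)$ Steiner points), so the only missing ingredient is a per-tree lightness bound. The key step is a scale-by-scale packing argument. Each dumbbell tree is organized along a hierarchical net decomposition of $M_X$ with geometrically decreasing scales; the edges at a fixed scale $r$ form a bounded-degree structure on an $r$-net of $X$, where the degree bound and the bound on how many net points lie within any $O(r)$-ball are both $\eps^{-O(\ddim)}$ by the doubling property. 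Charging the scale-$r$ edges of a single tree against the part of an MST of $M_X$ that lives at scale $r$ then costs only an $\eps^{-O(\ddim)}$ factor, and summing the geometric contributions over all scales preserves this factor. Hence each tree has weight $\eps^{-O(\ddim)}\cdot w(\MST(M_X))$, i.e., the cover is $\eps^{-O(\ddim)}$-light; if a light tree cover with these parameters is available off the shelf, one may simply cite it instead.

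The main obstacle is this lightness analysis, and three points need care. First, the trees may contain Steiner vertices, so the charging must be performed against the MST of the underlying \emph{metric} $M_X$ (and the target metric for \Cref{thm:ub} applied to a tree $T_i$ in the cover is $M_{T_i}$ on $V_{T_i}\supseteq X$), and one must check that Steiner edges do not blow up the weight and that the vertex count stays $\Theta(n)$ so that the $n^{2/k}$ factor is unaffected. Second, one must verify that the bounded-degree and bounded-overlap guarantees of the Dumbbell construction carry over to doubling metrics with the quantitative dependence $\eps^{-O(\ddim)}$, rather than picking up, say, an extra $\log n$ factor that would break the tight regime $k=\Theta(\log n)$. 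Third --- and this is why a bound on the union of all trees would not suffice --- \Cref{cor:reduction} is applied one tree at a time, so it is the lightness of \emph{each} tree, not merely of their union, that must be $\eps^{-O(\ddim)}$. Once the light tree cover is in hand, the remainder is the one-line substitution into \Cref{cor:reduction} described above.
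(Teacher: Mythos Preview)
Your high-level plan is exactly the paper's: plug a light $(1+\eps)$-tree cover for doubling metrics into \Cref{cor:reduction}. The only difference is that the paper does not attempt the packing/charging argument you sketch; it simply invokes an off-the-shelf result (Theorem~1.2 of~\cite{CCLST25}) giving an $\eps^{-O(\ddim)}$-light $(\eps^{-O(\ddim)},1+\eps)$-tree cover, and then the one-line substitution you describe finishes the proof. Since you explicitly note that citing such a result would suffice, your proposal is correct and essentially coincides with the paper's approach; the per-tree lightness, Steiner-vertex, and $\log n$-factor concerns you raise are precisely what~\cite{CCLST25} handles, so in the paper they are absorbed into that citation rather than argued directly.
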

Next, we compare our result with the aforementioned upper bound by Elkin and Solomon~\cite{ES15}. Since both constructions have a term $\eps^{-O(\ddim)}$ in lightness, we ignore those dependencies for clarity.
The aforementioned construction of Elkin and Solomon \cite{ES15} achieves hop-diameter $O(\log_{\rho}{n} + \alpha(\rho))$ and lightness $O(\rho\cdot \log_{\rho}{n})$, for a parameter $\rho \ge 2$. In other words, the construction has hop-diameter $k' + O(\alpha(\rho))$ and lightness $k'n^{c/k'}$ for $k' \ge 1$, $\rho = n^{c/k'}$ and a constant $c$. Note that this tradeoff does not include values in the regime where hop-diameter is $o(\alpha(n))$. Namely, when $k'=O(\alpha(n))$, then $\rho=n^{\Omega(1/\alpha(n))}$ and the dominant term in hop-diameter is $\alpha(\rho)= \Theta(\alpha(n))$. In addition, the exponent $O(1/k)$ of $n$ in the lightness is not asymptotically tight. Recall that the tradeoff we achieve is hop-diameter $k$ versus lightness of $O(kn^{2/k})$, which holds for every value of $1 \le k \le n$. In other words, we give a fine-grained tradeoff for every value of $k$ while nailing down the correct exponent of $n$, due to the lower bound we discuss next.

We complement our constructions with a lower bound for the  $n$-point \emph{uniform line metric}, which is a set of points on $[0,1]$ with coordinates $i/n$ for $0\le i \le n-1$. Refer to \Cref{sec:lb} for the proof.
\begin{restatable}{theorem}{lb}\label{thm:lb}
For every $n \ge 1$, every $k$ such that $1 \le k \le n$, any spanner with hop-diameter $k$  for the $n$-point uniform line metric must have lightness  $\Omega(kn^{2/k})$.
\end{restatable}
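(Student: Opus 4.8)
The plan is to rescale the points to $\{0,1,\dots,n-1\}$, so the MST has weight $n-1$ and it suffices to show that every spanner $H$ of hop-diameter $k$ has total edge weight $w(H):=\sum_{e\in E(H)}\operatorname{len}(e)=\Omega(k\,n^{1+2/k})$. The only structural fact I need is one that holds for an arbitrary stretch, since it follows purely from a spanner path being a walk along the line: for any two points $a<b$ and any $a$--$b$ path in $H$, every unit gap $g_j=(j,j+1)$ with $a\le j<j+1\le b$ is crossed by some edge $(c,d)$ of that path (i.e.\ $c\le j<j+1\le d$). Writing $d_j$ for the number of edges of $H$ crossing $g_j$, this yields both the identity $w(H)=\sum_{j=0}^{n-2}d_j$ and the routing constraint that any path between a point on the left of a cut and a point on its right must contain a crossing edge.

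Next I would set up a divide-and-conquer recurrence. Split $\{0,\dots,n-1\}$ at the midpoint into halves $L$ and $R$, and let $C$ be the set of edges crossing the central gap, so $E(H)=E(H[L])\sqcup E(H[R])\sqcup C$. For a $1$-spanner, a shortest $k$-hop spanner path between two points of $L$ has length equal to their distance, hence is monotone and stays inside $L$; thus $H[L]$ is again a hop-$k$ $1$-spanner of the $(n/2)$-point line, and likewise $H[R]$. Letting $W(m,k)$ denote the minimum weight of a hop-$k$ $1$-spanner of the $m$-point uniform line, this gives $w(H)\ge 2\,W(n/2,k)+w(C)$. (For a general stretch $t$ one argues similarly, using that a $t$-spanner path between two points of $L$ can leave $L$ only near the cut, and absorbing that slack.)

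The heart of the proof is the single-cut lower bound $w(C)=\Omega(n^{1+2/k})$, and this is the step I expect to be the main obstacle. The intuition: the $\Theta(n^2)$ left--right pairs must all be routed within $k$ hops, each route using a crossing edge as its first crossing hop; hence the left endpoints of $C$ must be reachable from every point of $L$ within $k-1$ monotone hops inside $H[L]$, and symmetrically on the right. Unfolding this --- essentially by an inner induction on $k$, together with the auxiliary estimate that ``reaching a fixed $s$-element set within $r$ monotone hops on an $m$-point line'' already forces a prescribed amount of structure --- shows the crossing edges cannot simultaneously be few and short. The subtle part is extracting the exponent $2/k$ rather than the weaker $1/k$ one would obtain from a single source: this requires genuinely exploiting that quadratically many pairs must be served across the cut with a shared $k$-hop budget.

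Finally I would solve the recurrence: $W(n,k)\ge 2\,W(n/2,k)+c'\,n^{1+2/k}$ unrolls to $W(n,k)\ge c'\,n^{1+2/k}\sum_{i\ge 0}2^{-2i/k}$, and the geometric-type series is $\Theta(k)$, so $W(n,k)=\Omega(k\,n^{1+2/k})$ and the lightness is $w(H)/(n-1)=\Omega(k\,n^{2/k})$; the base case $W(n,1)=\Theta(n^{3})$, attained by the complete graph, confirms the form of the bound. Comparing with \Cref{thm:ub} specialized to the path metric shows each estimate above --- the per-cut crossing cost and the summed recurrence --- is tight up to constant factors.
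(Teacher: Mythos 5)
Your reduction collapses at exactly the step you flagged as the heart of the proof: the single-cut claim $w(C)=\Omega(n^{1+2/k})$ is false for every $k\ge 3$, and no useful per-cut bound can replace it. Concretely, on the rescaled line $\{0,\dots,n-1\}$ take the star on the left half centered at the point $n/2-1$, the star on the right half centered at $n/2$, and the single unit edge $(n/2-1,n/2)$. Every cross pair is routed $a\to n/2-1\to n/2\to b$, which is monotone, hence exact, in $3$ hops, and pairs inside a half use $2$ hops; so this is a $1$-spanner of hop-diameter $3$ (a fortiori hop-diameter $k$ for any $k\ge3$) whose crossing weight at the central gap is exactly $1$. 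The intuition that ``the crossing edges cannot simultaneously be few and short'' is simply not true for a single cut: one short portal edge next to the cut suffices, with the cost pushed into the two halves. Your recurrence $w(H)\ge 2W(n/2,k)+w(C)$ cannot recover that pushed cost, because $W(n/2,k)$ is the unconditional minimum over spanners of a half and does not see that cheap cuts force the halves to be far more expensive than that minimum; the cut term contributes essentially nothing at any level, so the recursion yields no lower bound. (Two secondary issues: for general stretch $t$ the restriction $H[L]$ need not be a spanner of the half, since a $t$-approximate path between points near the cut may legitimately leave $L$; and even granting the cut bound, $\sum_i 2^{-2i/k}$ over the $\log_2 n$ levels is $\Theta(\min(k,\log n))$, not $\Theta(k)$.)

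The paper avoids per-cut accounting entirely. In \Cref{st:lb-k} it partitions the line into \emph{many} intervals of a common length $\ell$ and charges weight to \emph{points}: a point incident to an interval-crossing edge is ``global,'' and \Cref{st:global} shows that a $\gamma$ fraction of global points forces lightness $\Omega(\gamma^2\ell)$ -- a statement about all cuts of that scale simultaneously, which your single-cut example does not contradict (there, almost every point is global at scale $\Theta(n)$ but the cost sits at non-central cuts). Non-global points are handled by induction on the hop parameter: their first and last hops stay inside their intervals, so one representative per such interval forms a contracted line metric that must be spanned with $k-2$ hops, giving a recursion in $k$ (the generalization to $(p,n)$ line metrics exists precisely to support this contraction). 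Finally, the factor $k$ comes from a multi-scale case analysis over $\Theta(\log k)$ geometrically growing scales $\ell_i\approx 32^i n^{2/k}$: either the global fraction drops noticeably at some scale, and the dropped points pay via \Cref{st:global} at that scale plus induction, or it stays $\Omega(1)$ up to scale $\ell_{i'}\approx k n^{2/k}$, where the global points alone pay $\Omega(k n^{2/k})$. If you want to salvage a cut-based argument, you must aggregate cuts at each scale and track the trade-off between crossing weight and the residual hop budget inside intervals; that is essentially what the paper's global/non-global decomposition does.
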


Note that, the lower bound holds for any value of stretch $t \ge 1$. Moreover, Dinitz et al.~\cite{dinitz2010low} previously obtained similar bounds, using a more intricate analysis based on linear programming method and connections to the so-called minimum linear arrangement problem. On the other hand, our proof is based on a rather simple combinatorial argument described in less than two pages. We believe such a simple combinatorial argument will have further implications in designing lower bounds for other related problems. 
The purpose of the lower bound is two-fold. First, since the uniform line metric is conceivably the simplest tree metric, we conclude that our upper bound for tree metrics (\Cref{thm:ub}) is tight. In other words, our reduction (\Cref{cor:reduction}) is lossless. Second, the uniform line metric is also a doubling metric, which means that the tradeoff for the doubling metric (\Cref{thm:doubling}) of hop-diameter $k$ and lightness $O(kn^{2/k})$ is tight.

\section{Upper bound for tree metrics}\label{sec:ub}
This section is dedicated to proving Theorem~\ref{thm:ub}. 
\ub*

We first describe the spanner construction in \Cref{alg:ub}. The construction uses the following well-known result.
\begin{lemma}[\cite{FL22}]\label{lem:split}
Given any integer parameter $\ell > 0$ and an $n$-vertex tree $T$, there is a subset $X$ of at most $\frac{2n}{\ell + 1}-1$ vertices such that every connected component of $T \setminus X$ has at most $\ell$  vertices and at most two outgoing edges towards $X$.
\end{lemma}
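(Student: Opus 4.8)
The plan is to construct $X$ by a single bottom-up (post-order) pass over $T$, putting a vertex into $X$ only when \emph{forced}, and then to bound $|X|$ by a charging argument. Root $T$ at an arbitrary vertex $r$, and for $u\in V(T)$ let $T_u$ be the subtree rooted at $u$. Throughout the pass I maintain, for each already-processed vertex $u$, its \emph{active piece} $A_u$ --- the connected component containing $u$ in $T_u\setminus X$, with $A_u=\emptyset$ when $u\in X$ --- subject to two invariants: $|A_u|\le\ell$, and $A_u$ has at most one \emph{downward hole}, i.e.\ at most one edge of $T$ from a vertex of $A_u$ to a vertex of $X$ that is its child. When processing $v$ with children $c_1,\dots,c_d$ (all already processed), set $\sigma=1+\sum_{j:\,c_j\notin X}|A_{c_j}|$ and let $\beta$ be the number of indices $j$ with $c_j\in X$ plus the total number of downward holes of the pieces $A_{c_j}$ with $c_j\notin X$. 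If $\sigma\le\ell$ and $\beta\le1$, keep $v\notin X$ and set $A_v=\{v\}\cup\bigcup_{j:\,c_j\notin X}A_{c_j}$ (so $|A_v|=\sigma\le\ell$ and $A_v$ has $\beta\le1$ downward holes, preserving the invariants); otherwise put $v$ into $X$, which \emph{finalizes} every $A_{c_j}$ with $c_j\notin X$ as an output component. After $r$ is processed, the surviving piece $A_r$ is finalized as well.

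For correctness, note that every finalized piece $C$ is a connected subtree with $|C|\le\ell$ (size invariant), and it has at most two edges of $T$ leaving it: if $C$ was finalized because its parent $v$ entered $X$ then $C$ has one upward edge, to $v\in X$, plus at most one downward hole; and $C=A_r$ has no upward edge and at most one downward hole. Since two distinct components of $T\setminus X$ cannot be joined by an edge of $T$, every edge leaving $C$ leads to $X$, so $C$ has at most two outgoing edges towards $X$. Each vertex ends up in $X$ or in exactly one finalized piece, so the finalized pieces are precisely the components of $T\setminus X$. Finally, if $n\le\ell$ then $\sigma\le n\le\ell$ always, hence $\beta=0$ throughout by induction on the scan order, so $X=\emptyset$ and $T$ is the only component; thus it suffices to treat $n\ge\ell+1$, for which $2n/(\ell+1)-1\ge1$.

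To bound $|X|$, write $X=X_1\cup X_2$, where $v\in X_1$ if it entered $X$ because $\sigma>\ell$, and $v\in X_2$ if $\sigma\le\ell$ but $\beta\ge2$. For $v\in X_1$, the pieces finalized by $v$ have total size $\sigma-1\ge\ell$, so $S_v:=\{v\}\cup\bigcup_{j:\,c_j\notin X}A_{c_j}$ has $|S_v|\ge\ell+1$; since each component is finalized by a unique vertex of $X$, the sets $\{S_v:v\in X_1\}$ are pairwise disjoint, giving $|X_1|\le n/(\ell+1)$. For $v\in X_2$: each downward hole counted in its $\beta$ corresponds to a distinct vertex $y\in X$ that is a strict descendant of $v$ joined to $v$ by a path of $T$ internally avoiding $X$; following such a hole up through the successive active pieces, it is passed along unchanged until the first ancestor placed into $X$ (which then finalizes the piece containing it), so for every $y\in X$ there is at most one $v$ with $y$ counted in its $\beta$. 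Charging each $v\in X_2$ to two of the (at least two) such vertices $y$, no $y$ is charged more than once, so $2|X_2|\le|X|$, i.e.\ $|X_2|\le|X_1|$, whence $|X|\le 2|X_1|\le 2n/(\ell+1)$; accounting additionally for the vertices consumed by $X_2$-vertices (or for the nonempty root piece) trims the additive constant to the claimed $2n/(\ell+1)-1$. The step I expect to be the main obstacle is exactly this last amortization --- arguing rigorously that a downward hole cannot be double-counted, so that each $X$-vertex is ``blamed'' by at most one type-$2$ vertex, and extracting the precise coefficient $2/(\ell+1)$ rather than a cruder $O(1/\ell)$.
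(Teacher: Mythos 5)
The paper itself gives no proof of this lemma---it is imported from \cite{FL22}---so there is nothing in-paper to compare against; judging your argument on its own terms: the construction and the structural claims are sound, but the counting as written stops short of the stated constant. The bottom-up rule (put $v$ into $X$ exactly when the merged piece would exceed size $\ell$ or acquire two downward holes), the two invariants, and the conclusion that every finalized component has at most $\ell$ vertices and at most two edges into $X$ (one upward to the finalizing parent, plus at most one downward hole) are all correct, as is $|X_1|\le n/(\ell+1)$ via the disjoint sets $S_v$. The charging $2|X_2|\le|X|$ is also valid, though your justification is phrased inaccurately: a hole at $y\in X$ \emph{is} counted in the $\beta$ of every intermediate ancestor that stays out of $X$; what is true (and all you need) is that it is counted in the $\beta$ of at most one vertex that is \emph{placed into} $X$, namely the first ancestor along the chain that enters $X$, since that vertex finalizes the piece containing the hole edge $(\mathrm{parent}(y),y)$ and the hole then disappears. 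Since $X_2\subseteq X$, charging each $v\in X_2$ to its two distinct $X$-descendants is injective and $2|X_2|\le|X|$ follows.

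The genuine gap is the additive $-1$: your argument delivers $|X|\le 2|X_1|\le 2n/(\ell+1)$, and the proposed trimming does not work as described. Accounting for the vertices consumed by $X_2$-vertices (or for a nonempty root piece) only sharpens the first constraint to $(\ell+1)|X_1|+|X_2|\le n$, i.e.\ improves the final bound by $2|X_2|/(\ell+1)$ (or $2/(\ell+1)$), which is less than $1$ in general. A better use of your own two inequalities, $(\ell+1)|X_1|+|X_2|\le n$ and $|X_2|\le|X_1|$, is to maximize $|X_1|+|X_2|$ subject to both, which yields $|X|\le 2n/(\ell+2)$; this implies the claimed $2n/(\ell+1)-1$ whenever $n\ge(\ell+1)(\ell+2)/2$, but the remaining range $\ell+1\le n<(\ell+1)(\ell+2)/2$ still needs a separate (e.g.\ inductive) argument, so the exact statement is not yet established. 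The lapse is minor and explicitly acknowledged by you, and it is immaterial for this paper: the only property used downstream (in the weight recurrence for $k\ge 3$) is $|X|\le\frac{2n}{\ell+1}$, which your proof does provide.
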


Let $n \gets |V(T)|$. If $n \le k$, return the edge set of $T$. If $k = 1$, return the clique on $V(T)$. 
When $k=2$, let $c$ be the centroid of $T$, i.e., a vertex such that every component in $T \setminus \{c\}$ has size at most $n/2$. Let $F$ be the returned set of edges, initialized to an empty set. For every vertex $v$ in $V(T) \setminus \{c\}$, add to $F$ the edge $(c,v)$ of weight $\delta_T(c,u)$. Recurse with $k=2$ on each of the subtrees of $T \setminus \{c \}$. 
When $k\ge 3$, let $\ell = k$ if $n \le 2k^2$, and $\ell = \floor{2n^{2/k}}$ otherwise.
Let $X$ be the subset of $V$ guaranteed by \Cref{lem:split} with parameter $\ell$. Let $T_1, \ldots, T_g$ be the components of $T \setminus X$, each of which neighbors up to two vertices of $X$ and each of which has size at most $\ell$. 
For every connected component $T_i$ of $T \setminus X$, let $u_i$ and $v_i$ be the vertices in $X$ neighboring $T_i$. (It is possible that there is only one such vertex, but that case is handled analogously.) For each vertex $w$ in $T_i$, add to $F$ the edge $(u_i, w)$ of weight $\delta_T(u_i, w)$ and $(v_i, w)$ of weight $\delta_T(v_i, w)$. Recurse on $T_i$ with parameter $k$.
Consider a new tree $T_X$ with a vertex set $X$ and the edge set $E_X$ consisting of: (1) all the edges in $E$ with both endpoints in $X$ and (line 25 in \Cref{alg:ub}); (2) edges $(u_i, v_i)$ for every component $T_i$ which has two neighbors in $X$ (line 34 in \Cref{alg:ub}). Continue recursively for $T_X$ and hop-diameter $k-2$. This concludes the description of the algorithm.

\begin{algorithm}[!ht]
\begin{algorithmic}[1]
\Procedure{Spanner}{$k, T = (V,E)$}
\State $n \gets |V|$
\If {$n \le k$}
\State\Return $E$
\EndIf
\If {$k = 1$}
\State\Return $\{(u, v) \mid u \in V, v \in V\}$  \Comment{Clique on $V$}
\EndIf
\If {$k = 2$}
\State Let $c$ be the centroid of $T$
\State $F \gets \{(c, v) \mid v \in V \setminus \{c\} \}$ \Comment{Connect $c$ to every vertex in $V \setminus \{c\}$}
\State Let $T_1,  \ldots, T_g$ be the components of $T \setminus \{c\}$
\For {$1 \le i \le g$}
\State $n_i \gets |V(T_i)|$
\State $F\gets F \cup \Call{Spanner}{2, T_i}$
\EndFor 
\State\Return $F$
\EndIf
\State $\ell \gets \floor{2n^{2/k}}$
\If {$k=3$} $\ell \gets \floor{n^{2/3}}$
\EndIf
\If  {$k \ge 4$ \textbf{and} $n \le 2k^2$} $\ell \gets k$
\EndIf
\State Let $X$ be the set as in \Cref{lem:split} with parameter $\ell$ and let $T_1, \ldots, T_g$ be the components of $T \setminus X$
\State $E_X \gets \{ (u,v) \mid (u,v) \in E, u \in X, v \in X\}$ \label{lin:ex-1}
\For {$1 \le i \le g$}
\State $n_i \gets |V(T_i)|$
\State $F \gets F \cup \Call{Spanner}{k, T_i}$
\If {$T_i$ has one neighbor $u_i$ in $X$}  $F \gets F \cup \{(u_i, w) \mid w \in V(T_i)\}$
\Else
\State Let $u_i$ and $v_i$ be the two neighbors of $T_i$ in $X$
\State $F \gets F \cup \{(u_i, w) \mid w \in V(T_i)\}$
\State $F \gets F \cup \{(v_i, w) \mid w \in V(T_i)\}$
\State $E_X \gets (u_i, v_i)$ \label{lin:ex-2}
\EndIf
\EndFor
\State $F \gets F \cup  \Call{Spanner}{k-2, T_X = (X, E_X)}$
\State\Return $F$
\EndProcedure
\end{algorithmic}
\caption{Procedure for constructing a spanner of a tree metric induced by a given tree $T= (V,E)$. Parameter $k\ge 1$ is the required hop-diameter. The procedure returns the edge set $F$ of a spanner. The weight of every edge in $F$ is assigned to be equal to the distance of its endpoints in $T$.}\label{alg:ub}
\end{algorithm}

\Cref{lem:stretch-n-hop} asserts that the constructed spanner has stretch $1$ and hop-diameter $k$. \Cref{lem:light} asserts that the constructed spanner has lightness $O(kn^{2/k})$.

\begin{lemma}\label{lem:stretch-n-hop}
For every $k \ge 1, n \ge 1$ and every metric $M_T$ induced by an $n$-vertex tree $T$, procedure \Call{Spanner}{$k, T$} returns a 1-spanner of $M_T$ with hop-diameter $k$. 
\end{lemma}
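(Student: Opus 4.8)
The plan is to proceed by strong induction on $n$ and $k$ simultaneously, following the case structure of \Cref{alg:ub}. The base cases are immediate: if $n \le k$, the returned edge set is $E(T)$, and since each edge is weighted by the $T$-distance of its endpoints and $T$ has only $n-1 \le k-1$ edges, every tree path uses at most $k-1 \le k$ hops and has stretch exactly $1$ (as $\delta_T$ is the metric). If $k=1$, the returned clique contains every edge $(u,v)$ with weight $\delta_T(u,v)$, so every pair is connected by a single hop of exactly the right length. The stretch-$1$ property will in fact be automatic throughout, since every edge we ever add has weight equal to the $T$-distance of its endpoints, and $\delta_T$ restricted to $X$ is itself a tree metric; hence no path can be shorter than the true distance, and it suffices to exhibit, for each pair $u,v$, a path of total weight exactly $\delta_T(u,v)$ with at most $k$ hops. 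So the real content is the hop bound.

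Next I would handle the $k=2$ case. Here $c$ is the centroid, $F$ contains the star edges $(c,v)$ of weight $\delta_T(c,v)$, and we recurse with $k=2$ on each component $T_i$ of $T\setminus\{c\}$. For a pair $u,v$: if they lie in different components (or one is $c$), the path $u \to c \to v$ has two hops and weight $\delta_T(u,c)+\delta_T(c,v) = \delta_T(u,v)$ because $c$ separates them in $T$; if they lie in the same $T_i$, induction on $T_i$ (which has size $\le n/2 < n$) gives a $2$-hop path of the right weight. The recursion bottoms out since component sizes strictly decrease.

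The main case, $k\ge 3$, is where the induction on $k$ is used. Let $X$ be the separator from \Cref{lem:split}, with components $T_1,\dots,T_g$ each of size $\le \ell < n$, each adjacent to at most two vertices of $X$. Given $u,v$, I would split into subcases according to where $u$ and $v$ lie relative to $X$. (i) If both lie in the \emph{same} component $T_i$, use induction on $T_i$ (smaller size, same $k$): a $k$-hop, weight-$\delta_T$ path. (ii) If $u \in T_i$ and $v \in X$ (or symmetrically), observe that the unique $T$-path from $u$ to $v$ must exit $T_i$ through one of its (at most two) boundary vertices $u_i, v_i \in X$; say it exits through $u_i$. Then $F$ contains the edge $(u, u_i)$ of weight $\delta_T(u, u_i)$, and the residual path from $u_i$ to $v$ lies inside $T_X$ and has the same weight as $\delta_T(u_i, v)$ (this requires the key observation, to be checked, that $\delta_{T_X}$ agrees with $\delta_T$ on $X$, which follows because $T_X$ includes all $X$–$X$ edges of $T$ plus the shortcut edges $(u_i,v_i)$ of weight $\delta_T(u_i,v_i)$, and these shortcuts exactly replace the deleted components). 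By induction on $T_X$ with hop-diameter $k-2$, there is a path from $u_i$ to $v$ in the recursively-built spanner of $T_X$ with at most $k-2$ hops; prepending the edge $(u,u_i)$ gives $\le k-1 \le k$ hops. (iii) If $u \in T_i$ and $v \in T_j$ with $i \ne j$: the $T$-path from $u$ to $v$ exits $T_i$ at some boundary vertex $a \in \{u_i,v_i\}$ and enters $T_j$ at some boundary vertex $b \in \{u_j,v_j\}$. Use the edge $(u,a) \in F$ (1 hop), then the $(k-2)$-hop path from $a$ to $b$ inside the $T_X$-spanner by induction, then the edge $(b,v) \in F$ (1 hop), for a total of $\le k$ hops and weight $\delta_T(u,a)+\delta_T(a,b)+\delta_T(b,v) = \delta_T(u,v)$. (iv) Both in $X$: directly the $(k-2)$-hop $T_X$-path, which is $\le k$ hops.

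The main obstacle I anticipate is verifying rigorously that the auxiliary tree metric $M_{T_X}$ is a faithful ``compression'' of $M_T$ on the vertex set $X$ — i.e., that $\delta_{T_X}(a,b) = \delta_T(a,b)$ for all $a,b \in X$ — so that the recursive call on $T_X$ can be invoked as a legitimate instance of the lemma and so that stretch $1$ is preserved through the composition. This hinges on the structural guarantee of \Cref{lem:split} that each component has at most two outgoing edges to $X$, so that replacing a component $T_i$ by a single weighted edge $(u_i,v_i)$ (of weight equal to the $T$-distance, which is the length of the $T$-path through $T_i$) neither creates shortcuts nor lengthens distances; one also has to check $T_X$ is genuinely a tree (no cycles are introduced), which again uses the two-boundary-vertices property and the fact that contracting each $T_i$ to an edge between its two attachment points yields a minor of $T$. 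A secondary, more bookkeeping-style obstacle is confirming that all recursions terminate and that the parameter choices $\ell = \floor{2n^{2/k}}$ (or the small-$n$ special values) indeed satisfy $\ell < n$ when $k \ge 3$ and $n > k$, so that sizes strictly decrease; this is routine but must be stated. None of the weight estimates are needed here — those are deferred to \Cref{lem:light} — so once the structural claims about $T_X$ are nailed down, the hop count in each subcase is a one-line addition.
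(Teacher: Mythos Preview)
Your proposal is correct and follows essentially the same inductive strategy as the paper's proof. The only organizational difference is that the paper, instead of enumerating your subcases (i)--(iv), jumps directly to the \emph{deepest} recursion level at which $u$ and $v$ still lie in the same tree, which forces them into distinct components and collapses your case analysis into a single case; your explicit handling of cases (ii) and (iv), and your flagging of the fidelity $\delta_{T_X}=\delta_T|_X$, are points the paper leaves implicit.
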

\begin{proof}
When $n \le k$, the tree $T$ already has hop-diameter $k$. If $k=1$, the procedure construct a clique on $T$ and the lemma holds immediately. (Recall that every edge in $F$ has weight equal to the distance of its endpoints in $T$.)

Next we analyze the case $k=2$. Consider two vertices $u$ and $v$ in $T$ and consider the last recursion level where both $u$ and $v$ were in the same tree, $T'$. The centroid vertex $c'$ of $T'$ is connected via an edge to both $u$ and $v$. Vertex $c'$ is on the shortest path in $T'$ between $u$ and $v$, because after the removal of $c'$, vertices $u$ and $v$ are not in the same subtree anymore by the choice of $T'$. Hence, there is a 2-hop path between $u$ and $v$, consisting of edges $(u,c')$ and $(c', v)$. By construction, the weight of this path is $\delta_T(u,c') + \delta_T(c',v) = \delta_T(u,v)$, where the equality holds because $c'$ is on the shortest path between $u$ and $v$.

It remains to analyze the case $k \ge 3$. Consider two vertices $u$ and $v$ in $T$ and consider the last recursion level where both $u$ and $v$ were in the same tree, $T'$. Let $X'$ be the subset of $V(T')$ that is used in the construction to split $T$ into connected components. Let $T_u$ and $T_v$ be the connected components containing $u$ and $v$, respectively. By the choice of $T'$, the components $T_u$ and $T_v$ are different. Let $u'$ and $v'$ be the vertices in $X'$ such that $u'$ is neighboring $T_u$, $v'$ is neighboring $T_v$ and $u'$ and $v'$ lie on the shortest path between $u$ and $v$. Such a vertex $u'$ exists because all the shortest paths stemming from $T_u$ and going outside of $T_u$ contain one of the at most two vertices in $X$ that neighbors $T_u$. The argument is analogous for $v'$. 
From the construction, we know that the constructed spanner contains edges $(u, u')$ and $(v', v)$. Recall that the vertices in $X$ are connected recursively using a construction for hop-diameter $k-2$. This means that there is a path between $u$ and $v$ with at most $k-2+2=k$ hops. (It is possible that $u' = v'$, but this case is handled similarly.) The stretch of the path between $u'$ and $v'$ is 1 by the induction hypothesis. The weights of edges $(u,u')$ and $(v, v')$ correspond to the underlying distances of their endpoints in $T$. Since $u'$ and $v'$ lie on the shortest path between $u$ and $v$ in $T$, the weight of the spanner path between $u$ and $v$ is equal to their distance in $T$.
\end{proof}

\begin{lemma}\label{lem:light}
For every $k \ge 1, n \ge 1$ and every metric $M_T$ induced by an $n$-vertex tree $T$ of weight $L$, procedure \Call{Spanner}{$k, T$} returns a spanner $H_k$ of $M_T$ with weight $W_k(n, L) = O(kn^{2/k}L)$.
\end{lemma}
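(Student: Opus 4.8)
The plan is to set up a recurrence for $W_k(n,L)$ by bounding the weight added at the top level of \textsc{Spanner}$(k,T)$ and charging the recursive calls. First I would dispose of the base cases: when $n \le k$, the returned tree has weight $L$; when $k=1$, the clique on $n$ vertices has weight at most $\binom{n}{2}\cdot\diam(T) \le n^2 L$, which is $O(kn^{2/k}L)$ since $n^{2/k}\ge n$ for $k=1$. For $k=2$, at each level a centroid $c'$ of a subtree $T'$ is joined to all its vertices; each such edge $(c',w)$ has weight $\delta_{T'}(c',w) \le L'$, the weight of $T'$, so the level contributes at most $(|V(T')|-1)L'$. Summing over one level of the centroid recursion, the subtrees at a given level are vertex-disjoint and each is a subtree of $T$, so their weights sum to at most $L$; there are $O(\log n)$ levels, giving $W_2(n,L) = O(nL\log n) = O(n\cdot n L) $... more carefully, $W_2(n,L)=O(n L \log n)$, and since $n^{2/2}=n$, this is $O(2\cdot n^{2/2}L\cdot \log n)$; I would need to check whether the intended bound absorbs the $\log n$ — likely the $k=2$ case is meant to give $O(nL)$ per level with $O(\log n)$ levels, i.e. $O(n L \log n)$, and since we only need $O(kn^{2/k}L)=O(nL)$ for $k=2$, this suggests the centroid edges must instead be charged so that \emph{each edge of $T$ is paid for $O(1)$ times per vertex above it}, i.e. total $O(nL)$; I will revisit this, but the clean statement is that a centroid decomposition of a tree of weight $L$ produces a star-cover of total weight $O(L\log n)$, and if a sharper $O(L)$ bound is available for trees that is what should be used.

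For the main case $k\ge 3$: by Lemma~\ref{lem:split} with parameter $\ell=\floor{2n^{2/k}}$, we get $|X| \le \frac{2n}{\ell+1}-1 \le n/n^{2/k} = n^{1-2/k}$, and components $T_1,\dots,T_g$ each of size $\le \ell$ with at most two edges to $X$. The edges added directly at this level are, for each component $T_i$ and each $w\in V(T_i)$, the (at most two) edges to $u_i,v_i$; each has weight at most $\delta_T$(endpoint, $T_i$) $\le$ (weight of $T_i$) $+$ (weight of the $\le 2$ connecting edges), but more usefully each such edge has weight bounded by the weight of the unique $T$-path it shadows, which lies in $T_i$ plus one connecting edge. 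The key counting step is: summing $\delta_T(u_i,w)$ over all $w\in V(T_i)$ is at most $|V(T_i)|$ times (weight of $T_i$ plus connecting edge) $\le \ell\cdot(L_i + e_i)$ where $L_i$ is the weight of $T_i$ and $e_i$ the connecting-edge weight; summing over $i$, $\sum_i L_i \le L$ and $\sum_i e_i \le L$ (the $T_i$ and the connecting edges are all disjoint sub-objects of $T$), so the direct contribution of this level is $O(\ell L)=O(n^{2/k}L)$. Then $T_X$ has $|X|\le n^{1-2/k}$ vertices and weight at most $L$ (its edges are edges of $T$, lines~\ref{lin:ex-1}, plus one surrogate edge $(u_i,v_i)$ per component of weight equal to $\delta_T(u_i,v_i)\le L_i+e_i+$(second edge), so weight of $T_X$ is $O(L)$), and it is recursed with hop-bound $k-2$. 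The components $T_i$ are recursed with hop-bound $k$ but have size $\le\ell$ and total weight $\le L$. This yields the recurrence
\[
W_k(n,L) \;\le\; \sum_{i=1}^{g} W_k\bigl(|V(T_i)|,\,L_i\bigr) \;+\; W_{k-2}\bigl(|X|,\,O(L)\bigr) \;+\; O(n^{2/k}L),
\]
with $\sum_i |V(T_i)|\le n$, each $|V(T_i)|\le\ell=\floor{2n^{2/k}}$, $\sum_i L_i\le L$, and $|X|\le n^{1-2/k}$.

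The main obstacle is solving this recurrence cleanly, i.e. showing $W_k(n,L)=O(kn^{2/k}L)$, and in particular controlling how the error term compounds across both recursive branches. I would argue by induction on $k$, and for fixed $k$ by induction on $n$ (or on the recursion depth). For the $T_i$-branch: since each $T_i$ has at most $\ell\le 2n^{2/k}$ vertices, applying the inductive bound for the \emph{same} $k$ gives $\sum_i W_k(|V(T_i)|,L_i) \le \sum_i O(k\,\ell^{2/k} L_i) = O(k\,\ell^{2/k})\sum_i L_i \le O(k\,\ell^{2/k}L)$ — but $\ell^{2/k}=(2n^{2/k})^{2/k}=n^{4/k^2}\cdot 2^{2/k}$, which is much smaller than $n^{2/k}$, so this branch is lower-order; one must be slightly careful because the $T_i$-recursion does not immediately terminate, so instead I would track the recursion-tree depth: each "same-$k$" recursion shrinks the vertex count from $n$ to at most $2n^{2/k}$, which after $O(\log\log_{?} n)$ steps reaches size $O(k)$ or $2k^2$ (where $\ell$ is reset to $k$); summing the $O((\text{current }n)^{2/k}L)$ top-level costs over this geometric-in-the-exponent chain gives a total that is dominated by the first term $O(n^{2/k}L)$. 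For the $T_X$-branch, the hop-parameter drops by $2$ each time, so it is invoked $\lceil k/2\rceil$ times before bottoming out (at $k\le 2$, handled above), the vertex count going $n \to n^{1-2/k} \to \dots$; by induction $W_{k-2}(n^{1-2/k},O(L)) = O((k-2)(n^{1-2/k})^{2/(k-2)}L)$, and the exponent $(1-2/k)\cdot\frac{2}{k-2} = \frac{2}{k}$, so this term is $O(k n^{2/k}L)$ — exactly matching, which is why the bound is tight and why the constant must be tracked so the $O(k)$ additive levels of $O(n^{2/k}L)$ error sum to $O(kn^{2/k}L)$ rather than blowing up. So the crux is verifying the exponent identity $(1-2/k)\cdot\frac{2}{k-2}=\frac2k$ (equivalently $(k-2)/k \cdot 2/(k-2) = 2/k$) and checking that the accumulated additive error over the $O(k)$ levels of the $T_X$-recursion plus the lower-order $T_i$-recursion stays $O(kn^{2/k}L)$; I would handle the special values $k=3$ (where $\ell=\floor{n^{2/3}}$ and the chain reaches $k=1$, using the clique bound) and the $n\le 2k^2$ cutoff separately, noting there $\ell=k$ makes every component trivially of hop-diameter $k$ so no same-$k$ recursion is needed.
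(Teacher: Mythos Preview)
Your overall recurrence and the exponent identity $(1-\frac{2}{k})\cdot\frac{2}{k-2}=\frac{2}{k}$ are exactly what the paper uses, and your bound on the top-level edge weight $O(\ell L)=O(n^{2/k}L)$ is correct. The $T_i$-branch analysis is also fine; the paper handles it by straight induction on $n$ (giving $\sum_i W_k(\ell,L_i)\le ck\ell^{2/k}\sum_i L_i = ck\ell^{2/k}L$, which is lower-order since $\ell^{2/k}=O(n^{4/k^2})$) rather than unrolling, but either route works.

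The genuine gap is the $k=2$ case. You end up with $O(nL\log n)$, note this is not the required $O(nL)$, and leave it unresolved. This matters: the $T_X$-recursion for even $k$ bottoms out at $k=2$, and a $\log n$ loss there propagates up. The missing idea is a direct induction that exploits the centroid property $n_i\le n/2$. Define $L_i$ to be the weight of $T_i$ \emph{plus} the edge joining $T_i$ to the centroid $c$; then $\sum_i L_i = L$ and each star edge from $c$ into $T_i$ has weight at most $L_i$, so the stars at the top level cost at most $\sum_i n_i L_i$. By induction $W_2(n_i,L_i)\le n_i L_i$, hence
\[
W_2(n,L)\;\le\;\sum_i n_i L_i + \sum_i W_2(n_i,L_i)\;\le\;2\sum_i n_i L_i\;\le\;2\cdot\frac{n}{2}\sum_i L_i\;=\;nL.
\]
The point is not that there are $O(\log n)$ levels of cost $O(nL)$ each, but that the level cost is $\sum_i n_i L_i$ and the halving of $n_i$ makes the whole recursion telescope to $nL$. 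Once you have $W_2(n,L)\le nL$ (and the analogous $W_1(n,L)\le n^2L/2$ and $W_3(n,L)=O(n^{2/3}L)$ with explicit constants), the induction on $k$ for $k\ge 4$ closes with a single absolute constant, which is what the paper does via a short case analysis on $n$ versus $k$, $2k^2$, and $k^k$.
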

The lemma is implied by the following claims.

\begin{claim}\label{clm:ub:n-leq-k}
For every $1 \le n \le k$, $W_k(n, L) \le L$.
\end{claim}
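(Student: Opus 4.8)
The claim asserts that when $1 \le n \le k$, the spanner weight $W_k(n,L)$ is at most $L$. This is immediate from the base case of the algorithm: the procedure \textsc{Spanner}$(k,T)$ first checks whether $n \le k$, and if so returns exactly the edge set $E$ of the input tree $T$. Since every edge in the returned set $F$ is assigned weight equal to the distance of its endpoints in $T$, and for an edge $(u,v) \in E$ this distance is precisely the edge weight in $T$, the returned spanner is just $T$ itself with its original weights. Hence its total weight is exactly $L$, the weight of $T$, so $W_k(n,L) = L \le L$.

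The plan is therefore a one-line argument: invoke the first conditional in \Cref{alg:ub} (the test ``\textbf{if} $n \le k$ \textbf{then return} $E$''), observe that the weight of this edge set equals $L$ by the weight-assignment convention stated in the caption of \Cref{alg:ub}, and conclude $W_k(n,L) \le L$. There is essentially no obstacle here; the only thing to be slightly careful about is to note that the weight assignment ``distance of endpoints in $T$'' coincides with the original tree-edge weights when the edge is an edge of $T$, which is immediate. I would also remark (or leave implicit) that this base case is what anchors the subsequent inductive claims used to prove \Cref{lem:light}.
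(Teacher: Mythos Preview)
Your proof is correct and follows exactly the same approach as the paper: both simply observe that when $n \le k$ the algorithm returns the edge set of $T$, whose total weight is $L$.
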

\begin{proof}
The claim is true because the algorithm returns the edge set of $T$ in this case.
\end{proof}

\begin{claim}
For every $n \ge 1$, $W_1(n, L) \le \frac{n^2L}{2}$.
\end{claim}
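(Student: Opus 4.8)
The plan is to bound the weight of the clique $H_1$ returned by \Call{Spanner}{$1,T$} directly, since the algorithm simply outputs all $\binom{n}{2}$ pairs with edge weights equal to tree distances. The key observation is that every edge $(u,v)$ in the clique has weight $\delta_T(u,v)$, which is the length of the unique $u$--$v$ path in $T$; this path uses only edges of $T$, whose total weight is $L$. Hence each clique edge has weight at most $L$, and there are $\binom{n}{2} = \frac{n(n-1)}{2} \le \frac{n^2}{2}$ of them, giving $W_1(n,L) \le \frac{n^2 L}{2}$ immediately.

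Alternatively, and perhaps more in the spirit of a tight count, one can sum over tree edges: for each edge $e \in E(T)$ of weight $w_e$, the number of clique pairs $(u,v)$ whose tree path crosses $e$ is at most $\binom{n}{2}$ (trivially), so $\sum_{(u,v)} \delta_T(u,v) = \sum_{e \in E(T)} w_e \cdot |\{(u,v) : e \text{ on } u\text{-}v \text{ path}\}| \le \binom{n}{2} \sum_e w_e = \binom{n}{2} L \le \frac{n^2 L}{2}$. Either route gives the claim in one or two lines.

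I do not anticipate any real obstacle here — the statement is a coarse worst-case bound on a complete graph with bounded edge weights, and the only mild subtlety is being careful that $\binom{n}{2} \le \frac{n^2}{2}$ (which holds for all $n \ge 1$, with the case $n=1$ giving the empty edge set of weight $0 \le \frac{L}{2}$). This claim is presumably used only as the base case $k=1$ feeding into the recursive weight analysis for larger $k$, so a crude bound suffices and no optimization is needed.

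\begin{proof}
When $k=1$, procedure \Call{Spanner}{$1,T$} returns the clique on $V(T)$, with the weight of each edge $(u,v)$ set to $\delta_T(u,v)$. Since $\delta_T(u,v)$ is the length of the (unique) $u$--$v$ path in $T$, which consists of a subset of the edges of $T$, we have $\delta_T(u,v) \le L$ for every pair $(u,v)$. The clique has $\binom{n}{2}$ edges, so its total weight is at most $\binom{n}{2} \cdot L \le \frac{n^2 L}{2}$. Hence $W_1(n,L) \le \frac{n^2 L}{2}$.
\end{proof}
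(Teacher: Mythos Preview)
Your proof is correct and follows exactly the same approach as the paper: bound each clique edge by $L$ and multiply by $\binom{n}{2} \le \frac{n^2}{2}$.
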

\begin{proof}
The claim is true because the algorithm returns the clique on the $n$ vertices of $V$. Each edge in the clique has weight at most $L$. The total weight is thus at most $\binom{n}{2}L \le \frac{n^2L}{2}$.
\end{proof}

\begin{claim}
For every $n \ge 1$, $W_2(n, L) \le  nL$.
\end{claim}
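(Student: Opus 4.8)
The plan is to prove the claim $W_2(n,L)\le nL$ by induction on $n$, tracking both the total weight contributed at the current recursion level and the weights contributed by the recursive calls on the subtrees. For $n\le 2$ the claim follows from \Cref{clm:ub:n-leq-k}, so assume $n\ge 3$. At the top level, the algorithm picks a centroid $c$ and adds the $n-1$ star edges $(c,v)$ for $v\in V\setminus\{c\}$, each of weight $\delta_T(c,v)$. I would first bound $\sum_{v\ne c}\delta_T(c,v)$ by $L$: since $T$ has total edge weight $L$ and $\delta_T(c,v)$ is a sum of edge weights along a root-to-$v$ path in the tree rooted at $c$, each tree edge $e$ is counted once for every vertex $v$ in the subtree hanging off $e$ (away from $c$), and for a centroid every such subtree has at most $n/2$ vertices. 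Actually a cleaner bound suffices here: $\sum_{v\ne c}\delta_T(c,v)\le \sum_{e\in E(T)} (\text{number of vertices separated from } c \text{ by } e)\cdot w(e)$; I want this to be at most, say, $nL/2$ or at least something that combines with the recursion to give $nL$ — so I should be careful about the exact constant.

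The recursive structure is the key to closing the induction. Let $T_1,\dots,T_g$ be the components of $T\setminus\{c\}$ with $n_i=|V(T_i)|$ and weight $L_i$, where $\sum_i n_i = n-1$ and $\sum_i L_i \le L$ (the $L_i$ together with the weights of the edges incident to $c$ partition $E(T)$, so in fact $\sum_i L_i + (\text{weight of edges from } c \text{ to its tree-neighbors}) = L$). By the induction hypothesis, the recursive call on $T_i$ contributes weight at most $n_i L_i$ where $L_i$ here should be measured as the weight of the tree $T_i$ as a standalone tree — but caution: the algorithm recurses with \textsc{Spanner}$(2,T_i)$ where $T_i$ is the induced subtree, whose own MST weight is exactly $L_i$. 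So the recursive contribution is $\le \sum_i n_i L_i \le (\max_i n_i)\sum_i L_i \le \tfrac{n}{2}\cdot L$ since each $n_i\le n/2$ by the centroid property. Adding the top-level star contribution, which I need to show is at most $\tfrac{n}{2}L$, gives total $\le \tfrac{n}{2}L + \tfrac{n}{2}L = nL$, as desired.

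The main obstacle is getting the top-level bound on $\sum_{v\ne c}\delta_T(c,v)$ down to $\tfrac{n}{2}L$ rather than merely $(n-1)L$ or $\tfrac{n}{2}L$-ish with a bad constant. The naive bound $\delta_T(c,v)\le L$ for each $v$ only yields $(n-1)L$, which is too weak. Instead I would use the per-edge accounting: for each tree edge $e$, removing $e$ splits $T$ into two parts, one containing $c$; let $m_e$ be the size of the part not containing $c$. Then $\sum_{v\ne c}\delta_T(c,v) = \sum_{e\in E(T)} m_e\, w(e)$. Since $c$ is a centroid, every component of $T\setminus\{c\}$ has at most $n/2$ vertices, hence $m_e\le n/2$ for every edge $e$ (as the $c$-free side of any edge is contained in a single such component). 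Therefore $\sum_{v\ne c}\delta_T(c,v)\le \tfrac{n}{2}\sum_e w(e)=\tfrac{n}{2}L$. Combining with the recursion bound from the previous paragraph completes the induction and proves $W_2(n,L)\le nL$.
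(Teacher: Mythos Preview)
Your proof is correct and follows essentially the same inductive scheme as the paper: split the weight into the star edges at the centroid plus the recursive calls on the subtrees, bound each by $\tfrac{n}{2}L$ using the centroid property $n_i\le n/2$, and sum. The only cosmetic difference is that the paper defines $L_i$ to include the edge from $T_i$ to $c$ (so that $\sum_i L_i = L$ exactly) and bounds each star edge to $T_i$ crudely by $L_i$, whereas you keep $L_i$ as the weight of $T_i$ alone and use the cleaner per-edge identity $\sum_{v\ne c}\delta_T(c,v)=\sum_e m_e\,w(e)$; both routes land on the same $\tfrac{n}{2}L+\tfrac{n}{2}L$ bound.
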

\begin{proof}
We use $L_i$ to denote the weight of $T_i$ plus the weight of the edge connecting $T_i$ to $c$.
Clearly, $L = \sum_{i=1}^{g}L_i$. 
From the construction we have that $c$ is connected by an edge to each of the vertices of $T \setminus \{c\}$. The weight of these edges can be upper bounded by $\sum_{i=1}^{g}n_iL_i$, since for each $i \in [g]$, the edge between $c$ and a vertex in $T_i$ has weight of at most $L_i$. The construction proceeds recursively on each $T_i$, and the total weight incurred by recursion is at most $\sum_{i=1}^{g}W_2(n_i, L_i)$. We proceed to upper bound $W_2(n,L)$ inductively. 
\begin{align*}
W_2(n, L) &\le \sum_{i=1}^{g}n_iL_i + \sum_{i=1}^{g}W_2(n_i, L_i)  \\
&\le \sum_{i=1}^{g}n_iL_i + \sum_{i=1}^{g}n_iL_i & \text{induction hypothesis}\\
&= 2\sum_{i=1}^{g}n_iL_i \\
&\le 2\sum_{i=1}^{g}\frac{n}{2}L_i\\
&\le nL
\end{align*}
\end{proof}

\begin{claim}
Consider an invocation of \Call{Spanner}{$k,T$} for $k\ge 3$ and let $L_i$ be the weight of $T_i$ plus the weight of the (at most two) edges connecting $T_i$ to $X$.
Then, $W_k(n,L) \le 2\ell L + W_{k-2}\left(\frac{2n}{\ell+1}, L\right) + \sum_{i=1}^g W_k(\ell, L_i)$.
\end{claim}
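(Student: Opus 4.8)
The plan is a direct, edge‑by‑edge weight accounting of the $k\ge 3$ branch of \Call{Spanner}{$k,T$}. That branch adds exactly three groups of edges, and I would bound the total weight of each group in turn: (i) for every component $T_i$ of $T\setminus X$, the edges $(u_i,w)$ — and, when $T_i$ has a second neighbour in $X$, also $(v_i,w)$ — over all $w\in V(T_i)$; (ii) the edges returned by the recursive call \Call{Spanner}{$k,T_i$} on each $T_i$; and (iii) the edges returned by \Call{Spanner}{$k-2,T_X$}. Throughout I would use that $W_k(n,L)$ is non‑decreasing in both arguments — formally, one should read $W_k(n,L)$ as the worst‑case weight of the output over all $n$‑vertex trees of weight at most $L$, so this monotonicity holds by definition.

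First I would handle group (i). Fix $T_i$ and a neighbour $u_i\in X$, joined to $T_i$ by the tree edge $(u_i,a)$ with $a\in V(T_i)$; for any $w\in V(T_i)$, since $T_i$ is a connected subtree of the tree $T$ the $T$‑path between $a$ and $w$ stays inside $T_i$, whence $\delta_T(a,w)\le\mathrm{weight}(T_i)$, and by the triangle inequality in the tree metric $\delta_T(u_i,w)\le w(u_i,a)+\delta_T(a,w)\le L_i$. Since $n_i:=\lvert V(T_i)\rvert\le\ell$ by \Cref{lem:split}, the edges out of $u_i$ into $T_i$ weigh at most $n_iL_i\le\ell L_i$ in total, and symmetrically for $v_i$; so $T_i$ contributes at most $2\ell L_i$. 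The next point is that the internal edges of the various $T_i$, the at most $2g$ edges joining the $T_i$ to $X$, and the edges of $T$ with both endpoints in $X$ together partition $E(T)$; hence $\sum_i L_i = L - W_X$, where $W_X$ denotes the total weight of the $X$‑internal edges of $T$. Summing, group (i) weighs at most $2\ell\sum_i L_i\le 2\ell L$.

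Group (ii) is immediate: \Call{Spanner}{$k,T_i$} runs on a tree with $n_i\le\ell$ vertices and weight at most $L_i$, so by monotonicity it contributes at most $W_k(\ell,L_i)$, and summing gives $\sum_{i=1}^g W_k(\ell,L_i)$. For group (iii) I would bound the weight of $T_X$: its edges are the $X$‑internal edges of $T$ (total weight $W_X$) together with one shortcut $(u_i,v_i)$ for each two‑neighbour component $T_i$, and the same triangle‑inequality argument gives $w(u_i,v_i)=\delta_T(u_i,v_i)\le L_i$, so the shortcuts add at most $\sum_i L_i = L - W_X$ in total; thus $\mathrm{weight}(T_X)\le W_X+(L-W_X)=L$. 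Since $\lvert X\rvert\le\frac{2n}{\ell+1}-1\le\frac{2n}{\ell+1}$ by \Cref{lem:split}, \Call{Spanner}{$k-2,T_X$} contributes at most $W_{k-2}\bigl(\tfrac{2n}{\ell+1},L\bigr)$. Adding the three bounds yields the claimed inequality.

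The hard part — really the only step beyond bookkeeping — will be getting the right constant in the bound $\mathrm{weight}(T_X)\le L$ rather than the naive $\le 2L$: a careless estimate keeps all of $W_X$ and separately charges up to $L_i$ per shortcut $(u_i,v_i)$, losing a factor of $2$. The resolution is the observation that the $L_i$ and $W_X$ arise from a genuine partition of $E(T)$, so $\sum_i L_i + W_X = L$ exactly and there is nothing counted twice. Everything else reduces to the triangle inequality in the tree metric, the component‑size and boundary bounds of \Cref{lem:split}, and the monotonicity of $W_k$.
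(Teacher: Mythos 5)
Your proof is correct and takes essentially the same route as the paper: split the added weight into the star edges into each $T_i$ (at most $2\ell L_i$ each, using $n_i\le\ell$ and edge weights at most $L_i$), the recursive calls on the components (at most $\sum_i W_k(\ell,L_i)$ by monotonicity), and the recursive call on $T_X$ (at most $W_{k-2}(2n/(\ell+1),L)$ since $|X|\le 2n/(\ell+1)$). If anything you are a bit more explicit than the paper, which simply asserts $L=\sum_i L_i$ and leaves the bound $\mathrm{weight}(T_X)\le L$ implicit, while you justify it via the partition of $E(T)$ into component-internal edges, connecting edges, and $X$-internal edges.
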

\begin{proof}
Clearly, $L = \sum_{i=1}^{g}L_i$. 
Recall that from \Cref{lem:split}, we have $|X| \le \frac{2n}{\ell+1}-1 < \frac{2n}{\ell+1}$.
For each $T_i$, $1 \le i \le g$, the spanner construction adds an edge between each $v \in T_i$ and (at most two) vertices from $X$ which neighbor $T_i$. The total weight of these edges is at most $2\sum_{i=1}^{g}n_iL_i \le 2\ell \sum_{i=1}^gL_i = 2\ell L$. The first inequality holds because each subtree has at most $n_i \le \ell$ vertices. In addition, the vertices in $X$ are connected using a construction with hop-diameter $k-2$. The total weight of the edges used is at most $W_{k-2}(|X|, L) \le W_{k-2}\left(\frac{2n}{\ell+1}, L\right)$.
Finally, each of the components $T_i$ is handled inductively and this contributes at most $\sum_{i=1}^gW_k(n_i, L_i)\le \sum_{i=1}^gW_k(\ell, L_i)$ to the weight.
\end{proof}

\begin{claim}\label{clm:ub:k-3}
For every $n \ge 1$, $W_3(n, L) \le 16n^{2/3}L$.
\end{claim}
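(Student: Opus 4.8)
The plan is to bound $W_3(n,L)$ by unrolling the recursion from the previous claim with $k=3$ and controlling the depth of the recursion. Recall that for $k=3$ the algorithm sets $\ell=\floor{n^{2/3}}$ (assuming $n$ is large enough; the base case $n\le 3$ is covered by \namedref{Claim}{clm:ub:n-leq-k}). The recursion from the previous claim specializes to
\[
W_3(n,L)\ \le\ 2\ell L\ +\ W_1\!\left(\tfrac{2n}{\ell+1},L\right)\ +\ \sum_{i=1}^g W_3(\ell,L_i),
\]
where $\sum_i L_i = L$ and each subtree $T_i$ has at most $\ell\le n^{2/3}$ vertices. The first term is at most $2n^{2/3}L$. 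For the middle term, since $\ell+1>n^{2/3}$ we have $\tfrac{2n}{\ell+1}<2n^{1/3}$, so by the clique bound $W_1(m,L)\le m^2L/2$ this is at most $\tfrac12(2n^{1/3})^2L=2n^{2/3}L$. So the ``local'' cost at the top level is at most $4n^{2/3}L$.

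Next I would handle the recursive term $\sum_i W_3(\ell,L_i)$. The cleanest way is to prove the bound $W_3(n,L)\le 16n^{2/3}L$ by strong induction on $n$: assuming it holds for all smaller instances, $\sum_i W_3(\ell,L_i)\le 16\ell^{2/3}\sum_i L_i\le 16n^{4/9}L$, since $\ell\le n^{2/3}$. Adding this to the local cost gives $W_3(n,L)\le 4n^{2/3}L+16n^{4/9}L$. The point is that $16n^{4/9}\le 12n^{2/3}$ once $n^{2/9}\ge 16/12=4/3$, i.e. for all sufficiently large $n$ — say $n\ge 6$ or so — and for the finitely many small $n$ one checks the bound directly (or absorbs them into the base case $n\le 3$ and a small constant-size check). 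Then $4n^{2/3}L+16n^{4/9}L\le 4n^{2/3}L+12n^{2/3}L=16n^{2/3}L$, completing the induction.

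The main thing to be careful about is the interaction between the floor in $\ell=\floor{n^{2/3}}$ and the two competing estimates: we need $\ell\le n^{2/3}$ (immediate from the floor) for the recursive subtrees and the clique bound, and we need $\ell+1>n^{2/3}$ (also immediate, since $\floor{x}+1>x$) to make the middle term $W_1(2n/(\ell+1),L)$ small. Both directions hold simultaneously, so there is no real tension — one just has to invoke the correct inequality in each place. The only genuinely delicate point is verifying that the constant $16$ is large enough to close the induction, i.e. that the $n^{4/9}$ recursive overhead plus the $4n^{2/3}$ local cost really fit under $16n^{2/3}$; this reduces to the elementary inequality $16n^{4/9}\le 12n^{2/3}$ for $n$ above a small absolute threshold, with the handful of remaining small values dispatched by inspection. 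I expect this bookkeeping — rather than any conceptual difficulty — to be where the proof spends its effort.
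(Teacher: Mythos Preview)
Your proposal is correct and follows essentially the same approach as the paper: the same recursion with $\ell=\floor{n^{2/3}}$, the same bounds $2\ell L\le 2n^{2/3}L$ and $W_1(2n/(\ell+1),L)\le 2n^{2/3}L$, and the same inductive estimate $\sum_i W_3(\ell,L_i)\le 16n^{4/9}L$, closing with $4n^{2/3}+16n^{4/9}\le 16n^{2/3}$. The only minor difference is the threshold---the paper observes the last inequality already holds for all $n\ge 4$ (so the base case $n\le 3=k$ from \namedref{Claim}{clm:ub:n-leq-k} covers the rest with no extra hand-checking), whereas you estimate ``$n\ge 6$ or so''; tightening your numerical check to $n\ge 4$ removes the need for any case-by-case inspection.
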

\begin{proof}
When $k=3$, parameter $\ell$ is set to $\floor{n^{2/3}}$. 
\begin{align*}
W_3(n,L) &\le 2\ell L + W_{1}\left(\frac{2n}{\ell+1}, L\right) + \sum_{i=1}^g W_3(\ell, L_i)\\
& \le 2n^{2/3}L + W_1(2n^{1/3},L) + \sum_{i=1}^g W_3(\ell, L_i)\\
&\le 4n^{2/3}L + \sum_{i=1}^gW_3(\ell, L_i)\\
&\le 4n^{2/3}L + 16\ell^{2/3}L & \text{induction hypothesis}\\
&\le 4n^{2/3}L + 16 n^{4/9}L \\
&\le 16n^{2/3}L
\end{align*}
The last inequality holds for every $n \ge 4$. 
\end{proof}

\begin{claim} 
For every $1\le n \le 16$, $W_k(n, L) \le 9nL$.
\end{claim}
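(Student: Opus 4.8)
The plan is to prove the base-case bound $W_k(n,L) \le 9nL$ for all $1 \le n \le 16$ by a short case analysis on $k$, peeling off the regimes already handled and then checking the remaining small-$n$, $k \ge 4$ case directly against the recursion. First I would dispose of the easy regimes. If $n \le k$, then by \Cref{clm:ub:n-leq-k} we have $W_k(n,L) \le L \le 9nL$, so we may assume $n > k$, hence $k \le 15$. For $k=1$, the bound $W_1(n,L) \le \frac{n^2 L}{2}$ gives at most $8nL \le 9nL$ since $n \le 16$. For $k=2$ we already have $W_2(n,L) \le nL$, and for $k=3$ we have $W_3(n,L) \le 16 n^{2/3} L \le 9nL$ exactly when $16 \le 9 n^{1/3}$, i.e. $n \ge (16/9)^3 \approx 5.6$; for the finitely many $n \in \{4,\dots,16\}$ in range this holds, and for $n \le 3 \le k=3$ we are back in the trivial regime. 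So the only real work is $4 \le k \le 15$ with $k < n \le 16$.

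For that remaining case the relevant branch of the algorithm uses $\ell = k$ (since $n \le 16 \le 2k^2$ whenever $k \ge 3$), so \Cref{lem:split} with parameter $\ell = k$ produces $X$ with $|X| \le \frac{2n}{k+1} - 1$, and each component $T_i$ of $T\setminus X$ has at most $k$ vertices. By the previous claim,
\begin{equation*}
W_k(n,L) \le 2kL + W_{k-2}\!\left(\tfrac{2n}{k+1},\, L\right) + \sum_{i=1}^g W_k(n_i, L_i),
\end{equation*}
where $\sum_i L_i = L$ and each $n_i \le k$. For the recursive $T_i$ terms, since $n_i \le k = \ell$ we are in the $n \le k$ regime, so $W_k(n_i,L_i) \le L_i$ by \Cref{clm:ub:n-leq-k}, and these sum to $L$. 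For the $W_{k-2}$ term, note $\frac{2n}{k+1} \le \frac{2n}{k+1}$ and, crucially, $|X| \le \frac{2n}{k+1} - 1 < n$, so the argument is strictly smaller; moreover $|X| \le \frac{2n}{k+1} - 1 \le \frac{2\cdot 16}{5} - 1 < 7 \le 16$, so we may invoke the same claim inductively (on $n$, or simply on the recursion depth) to get $W_{k-2}(|X|, L) \le 9 |X| L \le 9\big(\frac{2n}{k+1} - 1\big) L$. Plugging in,
\begin{equation*}
W_k(n,L) \le 2kL + 9\!\left(\tfrac{2n}{k+1} - 1\right)\! L + L = \left(2k + \tfrac{18n}{k+1} - 8\right) L.
\end{equation*}
It then suffices to check $2k + \frac{18n}{k+1} - 8 \le 9n$ for all integers $4 \le k \le 15$ and $k < n \le 16$; since $\frac{18n}{k+1} \le \frac{18 \cdot 16}{5} < 58$ and $2k \le 30$, the left side is at most roughly $80$, while the right side is at least $9 \cdot 5 = 45$ — this is a bit tight, so the verification must actually be done, but it is a finite check: for each $k$, the inequality is linear in $n$ and is easiest to violate at the smallest admissible $n = k+1$, where it reads $2k + 18 - 8 \le 9(k+1)$, i.e. $2k + 10 \le 9k + 9$, i.e. $1 \le 7k$, which holds. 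For larger $n$ the right side grows faster (slope $9$ vs.\ slope $\frac{18}{k+1} \le \frac{18}{5} < 9$), so the inequality persists throughout the range.

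The main obstacle — really the only subtlety — is making the induction well-founded: the claim is stated for a fixed finite range $1 \le n \le 16$, so one cannot naively "induct on $n$" across that range together with the recursion, because the $W_{k-2}$ term's argument $\frac{2n}{k+1}$ need not be an integer and the $T_i$ terms could in principle (for other $k$) have $n_i$ outside the range. I would handle this cleanly by inducting on the recursion depth (equivalently, on the pair $(k, n)$ lexicographically, or just on $n + k$), observing that every recursive call in this branch has either strictly smaller hop-parameter ($k-2 < k$ for the $T_X$ call) or lands in the already-established $n_i \le k$ regime (for the $T_i$ calls), so \Cref{clm:ub:n-leq-k} closes those off with no circularity; the $\frac{2n}{k+1}$ non-integrality is harmless because $W_{k-2}$ is monotone in its first argument and the claim's conclusion $9 m L$ is what we feed forward. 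Once the induction is set up this way, the remainder is the bounded arithmetic check above.
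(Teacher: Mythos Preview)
Your approach is essentially the paper's: case-split on $k$, dispose of $k\le 3$ via the earlier claims, and for $k\ge 4$ unroll the recursion with $\ell=k$, bounding the $W_{k-2}$ term by the inductive hypothesis and the $T_i$ terms by \Cref{clm:ub:n-leq-k}. Your finishing arithmetic (checking at $n=k+1$ and comparing slopes in $n$) is correct and arguably tidier than the paper's cruder estimate $2kL + \frac{18n}{k}L + L \le 2nL + 6nL + L \le 9nL$.

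There is, however, a genuine small gap in your $k=3$ case. You assert that the chain $W_3(n,L)\le 16n^{2/3}L \le 9nL$ covers ``the finitely many $n\in\{4,\dots,16\}$'', but you yourself computed that the second inequality requires $n\ge (16/9)^3\approx 5.6$, and indeed it \emph{fails} at $n=4$ ($16\cdot 4^{2/3}\approx 40.3>36$) and $n=5$ ($\approx 46.8>45$). The paper deals with precisely these two values by direct inspection of the algorithm, and you need to as well: for $k=3$ and $n\in\{4,5\}$ one has $\ell=\lfloor n^{2/3}\rfloor=2$, hence $|X|\le \frac{2n}{3}-1\le 2$ and each component has at most $2\le k$ vertices, so
\[
W_3(n,L)\ \le\ 2\ell L + W_1(|X|,L) + \sum_i W_3(n_i,L_i)\ \le\ 4L + 2L + L\ =\ 7L\ \le\ 9nL.
\]
With that patch your argument is complete and matches the paper's.
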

\begin{proof}
When $k=1$, we have $W_1(n, L) \le \frac{n^2}{2}\cdot L \le 8n L$.
When $k=2$, we have $W_2(n, L) \le nL$.
For $k = 3$ and $1 \le n \le 3$, by \Cref{clm:ub:n-leq-k} we have $W_3(n,L) \le L$. It is straightforward to verify the bound for  $k=3$ and $n \in \{4, 5\}$. For $k=3$ and $n\ge 6$, the bound is implied by \Cref{clm:ub:k-3} because $16n^{2/3} \le 9n$. Finally, when $k\ge 4$, parameter $\ell$ is set to $k$ and the upper bound on $W_k(n,L)$ is obtained as follows.
\begin{align*}
W_k(n,L) &\le 2k L + W_{k-2}\left(\frac{2n}{k}, L\right) + \sum_{i=1}^g W_k(k, L_i)\\
& \le 2k L + 9 \cdot \frac{2n}{k} \cdot L + L\\
& \le 2n L + \frac{18n}{3} \cdot L + L \\
& \le 9n L
\end{align*}
\end{proof}
\begin{claim}
For every $1\le k < n \le 8k$, $W_k(n, L) \le 39kL$.
\end{claim}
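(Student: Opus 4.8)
The plan is to case on $k$, and for every $k \ge 3$ to feed the input into the recursive estimate proved just above, namely $W_k(n,L) \le 2\ell L + W_{k-2}\!\left(\tfrac{2n}{\ell+1},L\right) + \sum_{i=1}^g W_k(\ell, L_i)$, exploiting that the hypothesis $n \le 8k$ forces both the split parameter $\ell$ and the sizes appearing in the recursive calls to be small (in fact $O(1)$). The hard part is getting the constant to come out to $39$: I expect the $k=3$ subcase to be the real obstacle, because the clean estimate $W_3(n,L) \le 16 n^{2/3}L$ from \Cref{clm:ub:k-3} is \emph{not} good enough at $n = 8k = 24$ (it gives $\approx 133L > 117L$), so one must unfold one level of the recursion there.

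\textbf{Small $k$.} For $k=1$ the hypothesis gives $n \le 8$, and $W_1(n,L) \le n^2 L/2 \le 32L \le 39L$ by the explicit clique bound. For $k=2$ it gives $n \le 16$, and $W_2(n,L) \le nL \le 16L \le 78L$ by the explicit $W_2$ bound.

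\textbf{The case $k=3$.} Here $4 \le n \le 24$, so the algorithm sets $\ell = \lfloor n^{2/3}\rfloor \le 8$, and since $\ell+1 > n^{2/3}$ the set $X$ from \Cref{lem:split} satisfies $|X| < 2n/(\ell+1) < 2n^{1/3} < 6$, i.e.\ $|X| \le 5$. Plug into the recursive bound with $k-2 = 1$: the first term is $2\ell L \le 16L$; the middle term is $W_1(|X|,L) \le |X|^2 L/2 \le 13L$; and since every component has $n_i \le \ell \le 8$, \Cref{clm:ub:k-3} gives $W_3(\ell,L_i) \le 16\cdot 8^{2/3} L_i = 64L_i$, so the recursive sum is at most $64\sum_i L_i \le 64L$. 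Hence $W_3(n,L) \le 16L + 13L + 64L < 117L = 39 \cdot 3 \cdot L$.

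\textbf{The case $k\ge 4$.} Since $k \ge 4$ we have $8k \le 2k^2$, so $n \le 2k^2$ and the algorithm sets $\ell = k$. Then every component has $n_i \le k$, so $W_k(\ell,L_i) = W_k(k,L_i) \le L_i$ by \Cref{clm:ub:n-leq-k}, giving a recursive sum of at most $\sum_i L_i \le L$. Also $1 \le |X| < 2n/(k+1) \le 16k/(k+1) < 16$, so $|X| \le 15$, and the claim that $W_{k'}(m,L) \le 9mL$ for $1 \le m \le 16$ (here $k' = k-2 \ge 2$) yields $W_{k-2}(|X|,L) \le 135L$. The recursive bound thus gives $W_k(n,L) \le 2kL + 135L + L = 2kL + 136L$, and since $136 \le 34k$ for $k \ge 4$, this is at most $36kL \le 39kL$, completing the case. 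The constant $39$ is essentially pinned here: the additive $O(1)\cdot L$ produced by the $W_{k-2}$ term must be absorbed into a multiple of $kL$, which is exactly why the statement is restricted to the regime $n \le 8k$.
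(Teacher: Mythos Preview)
Your proof is correct and, for $k\in\{1,2\}$ and $k\ge 4$, matches the paper's argument essentially verbatim (you sharpen $|X|\le 16$ to $|X|\le 15$, giving $2kL+136L$ instead of the paper's $2kL+145L$, but both are $\le 39kL$ for $k\ge 4$).

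The one genuine difference is the case $k=3$. The paper just invokes \Cref{clm:ub:k-3} to write $W_3(n,L)\le 16n^{2/3}L\le 16\cdot(8k)^{2/3}L\le 64kL$, which, as you correctly spotted, does \emph{not} reach the claimed $39kL=117L$ (at $n=24$ the bound is $\approx 133L$). Your single unrolling of the recursion---bounding $2\ell L\le 16L$, $W_1(|X|,L)\le |X|^2L/2\le 13L$ via $|X|\le 5$, and $\sum_i W_3(\ell,L_i)\le 64\sum_i L_i\le 64L$---gives $93L<117L$ and actually attains the stated constant. So your treatment of $k=3$ is more careful than the paper's and is in fact required for the constant $39$ to hold as written; the paper's proof of this subcase is a bit loose.
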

\begin{proof}
When $k=1$, we have $W_1(n,L) \le \frac{n^2}{2}\cdot L \le \frac{(8k)^2}{2}\cdot L \le 32kL$.
When $k=2$, we have $W_2(n, L) \le nL \le 8kL$.
When $k=3$, we have $W_3(n, L) \le 16n^{2/3}L \le 16 \cdot (8k)^{2/3}L\le 64kL$.
When $k \ge 4$, we have that $8k \le 2k^2$ and so $\ell = k$.
\begin{align*}
W_k(n,L) &\le 2k L + W_{k-2}\left(\frac{2n}{k}, L\right) + \sum_{i=1}^g W_k(k, L_i)\\
&\le 2k L + W_{k-2}(16,L) + L\\
&\le 2kL + 9\cdot 16L + L\\
&\le 39kL
\end{align*}
\end{proof}

\begin{claim}
For every $k\ge 1$ and $1 \le n \le 2k^2$, $W_k(n, L) \le 41kL$.
\end{claim}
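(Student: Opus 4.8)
The plan is to prove the bound $W_k(n,L) \le 41 k L$ for all $1 \le n \le 2k^2$ by induction on $k$, leaning on the claims already established for small $k$ and small $n$. First I would dispatch the base cases $k=1$ and $k=2$ directly: by the clique bound $W_1(n,L) \le \frac{n^2}{2} L \le \frac{(2k^2)^2}{2} L$ — wait, that is too large, so instead for $k=1$ one has $n \le 2$, hence $W_1(n,L) \le \frac{4}{2} L = 2L \le 41 L$; for $k=2$ one has $n \le 8$ and $W_2(n,L) \le n L \le 8 L \le 82 L$, comfortably within the bound. For $k=3$, $n \le 18$, and $W_3(n,L) \le 16 n^{2/3} L \le 16 \cdot 18^{2/3} L < 16 \cdot 7 L \le 112 L \le 123 L$, again fine.

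For the inductive step $k \ge 4$ with $4 \le n \le 2k^2$, I would invoke the recurrence
\[
W_k(n,L) \le 2\ell L + W_{k-2}\!\left(\tfrac{2n}{\ell+1}, L\right) + \sum_{i=1}^g W_k(\ell, L_i),
\]
where, since $n \le 2k^2$, the algorithm sets $\ell = k$. Then the last sum is bounded via the earlier claim $W_k(k, L_i) \le L_i$ (the case $n \le k$), giving $\sum_i W_k(k, L_i) \le \sum_i L_i = L$. For the middle term, note $\frac{2n}{\ell+1} = \frac{2n}{k+1} \le \frac{2 \cdot 2k^2}{k+1} \le 2k \le 2(k-2)^2$ once $k$ is not too small, so the quantity $\frac{2n}{k+1}$ stays within the regime $n' \le 2(k-2)^2$ where the induction hypothesis $W_{k-2}(n',L) \le 41(k-2) L$ applies. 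Altogether
\[
W_k(n,L) \le 2kL + 41(k-2)L + L = 41kL - 80kL/\ldots
\]
hmm, $2k + 41(k-2) + 1 = 43k - 81 \le 41k$ precisely when $2k \le 81$, i.e.\ $k \le 40$; for larger $k$ one needs the sharper observation that $\frac{2n}{k+1} \le 2k \le 16$ is false, so instead bound $\frac{2n}{k+1} \le 2k$ and use the tighter earlier claim $W_{k-2}(n',L) \le 39(k-2)L$ valid for $n' \le 8(k-2)$, or $W_{k-2}(n',L) \le 9 n' L$ for $n' \le 16$, whichever applies; since $\frac{2n}{k+1} \le 2k$ can be as large as linear in $k$, the relevant regime is $n' \le 8(k-2)$, yielding $W_k(n,L) \le 2kL + 39(k-2)L + L \le 41kL$, which holds as $2k + 39(k-2) + 1 = 41k - 77 \le 41k$.

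The main obstacle I anticipate is bookkeeping the constant $41$ through the recursion so that $2\ell + (\text{hypothesis constant})(k-2) + 1 \le 41 k$ holds uniformly: the slack is only $O(1)$ per level, so one must be careful that $\frac{2n}{k+1}$ really does land in a regime covered by a previously-established claim with a good enough constant (the claim $W_{k-2}(n',L) \le 39(k-2)L$ for $n' \le 8(k-2)$ is exactly what is needed, since $\frac{2n}{k+1} \le \frac{4k^2}{k+1} < 4k < 8(k-2)$ for $k \ge 4$ — I'd double-check $4k < 8(k-2) \iff 16 < 4k \iff k > 4$, so $k=4$ needs separate handling, where $\frac{2n}{5} \le \frac{2\cdot 32}{5} = 12.8$, hence $n' \le 12 < 16$ and the claim $W_2(n',L)\le n'L \le 12L$ applies, giving $W_4(n,L) \le 8L + 12L + L = 21L \le 164 L$). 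Once the correct prior claim is selected in each subcase, the arithmetic closes with room to spare.
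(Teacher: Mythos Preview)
Your proposal is correct and lands on essentially the same argument as the paper: handle $k\le 3$ directly, and for $k\ge 4$ apply the recurrence with $\ell=k$, bound $\sum_i W_k(k,L_i)\le L$, and feed $\tfrac{2n}{\ell+1}\le 4k\le 8(k-2)$ into the previously established claim $W_{k-2}(n',L)\le 39(k-2)L$ to get $2k+39(k-2)+1=41k-77\le 41k$. The paper skips your exploratory detour through the self-referential $41(k-2)$ hypothesis and goes straight to the $39(k-2)$ bound; also, your separate treatment of $k=4$ is unnecessary since $4k\le 8(k-2)$ holds with equality there, so the $39(k-2)$ claim already covers it.
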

\begin{proof}
When $k=1$, we have $W_1(n,L) = \frac{n^2}{2}\cdot L \le \frac{(2k^2)^2}{2}\cdot L \le 2kL$.
When $k=2$, we have $W_2(n, L) \le nL \le 2k^2L \le 4kL$.
When $k=3$, we have $W_3(n, L) \le 16n^{2/3}L \le 16\cdot (2k^2)^{2/3}L \le 38kL$.
When $k \ge 4$, we have $\ell = k$.
\begin{align*}
W_k(n,L) &\le 2k L + W_{k-2}\left(\frac{2n}{k}, L\right) + \sum_{i=1}^g W_k(k, L_i)\\
& \le 2kL + W_{k-2}(4k, L) + L\\
& \le 2kL + 39(k-2)L + L\\
& \le 41kL
\end{align*}
\end{proof}

\begin{claim}
For all $k \ge 4$ and $n \ge 2k^2$, $W_k(n, L) \le ckn^{2/k}$, for an absolute constant $c$.
\end{claim}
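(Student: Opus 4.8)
The plan is to close the induction by the standard recurrence-unrolling argument, choosing the constant $c$ large enough to absorb all additive and boundary terms. Recall from the previous claim that for $k \ge 4$ and $n \ge 2k^2$ we have $\ell = \floor{2n^{2/k}}$, and
\[
W_k(n,L) \le 2\ell L + W_{k-2}\!\left(\tfrac{2n}{\ell+1}, L\right) + \sum_{i=1}^g W_k(\ell, L_i),
\]
with $\sum_i L_i = L$ and each $n_i \le \ell$. I would first bound the self-recursive term: since $n \ge 2k^2 \ge 2 \cdot 4^2$ and $k \ge 4$, the value $\ell = \floor{2n^{2/k}}$ lies in the range $1 \le \ell \le 2k^2$ precisely when $n \le (k^2)^{k/2}$, and more to the point $\ell \le 2n^{2/k}$, so whenever $\ell \le 2k^2$ the earlier claim gives $W_k(\ell, L_i) \le 41 k L_i$; and when $\ell > 2k^2$ one can recurse but the base of the recursion on $\ell$ is reached quickly because $\ell \le 2n^{2/k}$ shrinks the instance substantially. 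The cleanest route is to prove the statement by induction on $n$: the term $\sum_i W_k(\ell, L_i)$ either falls into the already-established regime $\ell \le 2k^2$ (contributing at most $41kL$, hence absorbed into $ckn^{2/k}$ since $n^{2/k}\ge 2k$), or satisfies $W_k(\ell,L_i)\le c k\ell^{2/k}L_i \le ck(2n^{2/k})^{2/k}L_i = c k \cdot 2^{2/k} n^{4/k^2} L_i$ by the induction hypothesis applied to the smaller instance $\ell < n$.

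Next I would handle the cross term $W_{k-2}\!\left(\tfrac{2n}{\ell+1}, L\right)$. Writing $m = \frac{2n}{\ell+1} \le \frac{2n}{2n^{2/k}} = n^{1-2/k} = n^{(k-2)/k}$, the key identity is $m^{2/(k-2)} \le \left(n^{(k-2)/k}\right)^{2/(k-2)} = n^{2/k}$. So by the induction hypothesis (this time on $k$, with $k-2 \ge 2$; for $k-2 \in \{2,3\}$ we instead use the explicit bounds $W_2(m,L)\le mL$ and $W_3(m,L)\le 16m^{2/3}L$ proved earlier, noting $m \le n^{(k-2)/k}$ makes these $o(n^{2/k}L)$-type terms — wait, for $k=4$, $m\le n^{1/2}$ and $W_2(m,L)\le mL \le n^{1/2}L = n^{2/k}L$, good; for $k=5$, $m \le n^{3/5}$ and $W_3(m,L)\le 16 m^{2/3}L \le 16 n^{2/5}L = 16 n^{2/k}L$, good), we get $W_{k-2}(m,L) \le c(k-2)m^{2/(k-2)}L \le c(k-2)n^{2/k}L \le ckn^{2/k}L$ with room to spare — the slack $c \cdot 2 \cdot n^{2/k}L$ is what pays for the other two terms.

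Assembling the three pieces: $2\ell L \le 4n^{2/k}L$; the cross term is at most $c(k-2)n^{2/k}L$; and the self-term is at most $\max\{41kL,\ c k \cdot 2^{2/k} n^{4/k^2}L\}$. Since $k\ge 4$ gives $2^{2/k}\le \sqrt 2$ and $n^{4/k^2} = (n^{2/k})^{2/k} \le (n^{2/k})^{1/2}$, while $n \ge 2k^2$ ensures $n^{2/k}\ge (2k^2)^{2/k}\ge 2k$ (for $k\ge 4$), the self-term $ck\sqrt 2 (n^{2/k})^{1/2}L$ is at most $ck n^{2/k}L/2$ once $n^{2/k}\ge 8$, i.e. for all relevant $n$. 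Also $41kL \le ckn^{2/k}L/4$ as soon as $n^{2/k}\ge 164/c$, which holds for $c$ a suitable constant. Hence
\[
W_k(n,L) \le 4n^{2/k}L + c(k-2)n^{2/k}L + \tfrac{c}{2}kn^{2/k}L \le ckn^{2/k}L
\]
provided $2c n^{2/k} L \ge 4 n^{2/k}L + \tfrac{c}{2}kn^{2/k}L$, i.e. $2c \ge 4 + \tfrac{c}{2}k$ — this fails for large $k$, so I must instead track the $-2$ in $k-2$ more carefully: the genuine gain is $c(k) - c(k-2) = 2c$ worth of budget times $n^{2/k}L$, which must cover $4n^{2/k}L$ (the $2\ell L$ term) plus $\tfrac c2 k n^{2/k}L$ (half the self-term) — and the self-term should really be bounded by $4n^{2/k}L$ too, not $\tfrac c2 k n^{2/k}L$, once we observe $ck\cdot 2^{2/k}n^{4/k^2} \le 4n^{2/k}$ for $n$ large relative to $k$, which is exactly the regime $n\ge 2k^2$ after checking the crossover. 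Then $2c \ge 8$ suffices, so $c = 4$ (or any $c\ge 4$ also satisfying the boundary inequality $41k \le \tfrac c4 k n^{2/k}$, e.g. $c$ a fixed larger absolute constant) works.

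\textbf{Main obstacle.} The delicate point is the self-recursive sum $\sum_i W_k(\ell,L_i)$: unlike the $k=3$ case where the recursion parameter drops to $W_1$, here $k$ stays fixed, so one must argue the recursion in $n$ terminates fast enough that this term is genuinely lower-order. This hinges on the inequality $\ell = \floor{2n^{2/k}} \le 2n^{2/k}$ together with $n^{4/k^2} = (n^{2/k})^{2/k} \ll n^{2/k}$ for $k \ge 4$, plus verifying that the instances of size $\ell$ either land in the already-proven regime $n\le 2k^2$ or continue to shrink geometrically in the exponent; bookkeeping the absolute constant $c$ so that it simultaneously absorbs the additive $2\ell L$, the $k-2$ versus $k$ gap, and all the $O(kL)$ boundary terms is the only real work.
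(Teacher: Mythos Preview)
Your overall architecture matches the paper's: same recurrence, same bound $2\ell L\le 4n^{2/k}L$, same treatment of the cross term via $m\le n^{(k-2)/k}$ so that $m^{2/(k-2)}\le n^{2/k}$ and $W_{k-2}(m,L)\le c(k-2)n^{2/k}L$. The difficulty you flag --- closing the self-recursive term $\sum_i W_k(\ell,L_i)$ --- is exactly the right one, but your resolution of it has a genuine gap.

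You assert that $ck\cdot 2^{2/k}n^{4/k^2}\le 4n^{2/k}$ ``for $n$ large relative to $k$, which is exactly the regime $n\ge 2k^2$.'' This is false. Rewriting, the inequality asks for $ck\cdot 2^{2/k}\le 4n^{(2k-4)/k^2}$. Take $n=2k^2$ and let $k\to\infty$: then $n^{(2k-4)/k^2}=(2k^2)^{2/k-4/k^2}\to 1$, so the right side tends to $4$ while the left side grows like $ck$. The inequality you need actually kicks in only around $n\ge k^k$, since then $n^{(2k-4)/k^2}\ge k^{(2k-4)/k}=k^{2-4/k}\ge k$, which swallows the factor $k$ on the left. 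Your fallback via the $41kL$ bound has the same problem: you write ``$41kL\le \tfrac{c}{4}kn^{2/k}L$ as soon as $n^{2/k}\ge 164/c$, which holds for $c$ a suitable constant,'' but again $n^{2/k}$ can be arbitrarily close to $1$ at the low end $n=2k^2$ when $k$ is large, so no fixed $c$ works.

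What the paper does to close this is a three-way split on $n$. If $n<(k/2)^{k/2}$ then $\ell=\lfloor 2n^{2/k}\rfloor<k$, so the self-term falls into the base case $W_k(\ell,L_i)\le L_i$ and contributes only $L$. If $(k/2)^{k/2}\le n<k^k$ then $\ell\le 2k^2$ so the $41kL$ bound applies, \emph{and} the lower bound on $n$ gives $n^{2/k}\ge k/2$, whence $41k\le 82n^{2/k}$ and the term is absorbed by the $2c$ slack. If $n\ge k^k$ then the induction hypothesis on the self-term gives $ck\cdot 2^{2/k}n^{4/k^2}L$, and now $k\cdot 2^{2/k}n^{(4-2k)/k^2}\le\sqrt 2$, so this is at most $c\sqrt 2\, n^{2/k}L$ --- a constant multiple of $n^{2/k}L$ with no factor of $k$, hence absorbed. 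The piece you are missing is precisely this: the lower threshold $n\ge(k/2)^{k/2}$ (equivalently $n^{2/k}\ge k/2$) is what makes $41k=O(n^{2/k})$, and the upper threshold $n\ge k^k$ is what kills the stray factor $k$ in the inductive self-term.
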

\begin{proof}
We have $\ell = \floor{2n^{2/k}}$.
\begin{align*}
W_k(n,L) &\le 2\ell L + W_{k-2}\left(\frac{2n}{\ell+1}, L\right) + \sum_{i=1}^g W_k(\ell, L_i)\\
&\le 4n^{2/k}L + W_{k-2}\left(n^{\frac{k-2}{k}},L\right) + \sum_{i=1}^g W_k\left(2n^{2/k}, L_i\right)
\end{align*}
\noindent\textit{Case 1: $n < (k/2)^{k/2}$.} Rearranging, we have that $2n^{2/k} < k$. This means that $W_k\left(2n^{2/k}, L_i\right) \le L_i$.
\begin{align*}
W_k(n,L) &\le 4n^{2/k}L + W_{k-2}\left(n^{\frac{k-2}{k}},L\right) + \sum_{i=1}^g W_k\left(2n^{2/k}, L_i\right)\\
&\le 4n^{2/k}L + c(k-2)n^{2/k}L + L\\
&\le ckn^{2/k}L
\end{align*}
The last inequality holds for any $c \ge 5/2$.

\noindent\textit{Case 2: $(k/2)^{k/2}\le n < k^k$.} Rearranging, we have that $2n^{2/k} < k^2$.
\begin{align*}
W_k(n,L) &\le 4n^{2/k}L + W_{k-2}\left(n^{\frac{k-2}{k}},L\right) + \sum_{i=1}^g W_k\left(2n^{2/k}, L_i\right)\\
&\le4n^{2/k}L + c(k-2)n^{2/k}L +\sum_{i=1}^g41kL_i\\
&\le ckn^{2/k}L
\end{align*}
The last inequality is true for any $c \ge 23$.
\noindent\textit{Case 3: $k^k \le n$}
\begin{align*}
W_k(n,L) &\le 4n^{2/k}L + W_{k-2}\left(n^{\frac{k-2}{k}},L\right) + \sum_{i=1}^g W_k\left(2n^{2/k}, L_i\right)\\
&\le 4 n^{2/k}L +c\cdot(k-2)\cdot n^{2/k}L + \sum_{i=1}^g ck(2n^{2/k})^{2/k}L_i\\
&\le 4 n^{2/k}L +c\cdot(k-2)\cdot n^{2/k}L + ck\cdot2^{2/k}\cdot n^{4/k^2}L\\
&=n^{2/k}L\left(4 + c(k-2) +ck\cdot2^{2/k}\cdot n^{\frac{4-2k}{k^2}}\right)\\
&\le n^{2/k}L\left(4 + c(k-2) +c\sqrt{2}\right)\\
&\le ckn^{\frac{2}{k}}L
\end{align*}
The penultimate inequality holds because $k\cdot2^{2/k}\cdot n^{\frac{4-2k}{k^2}} \le \sqrt{2}$ for $k\ge 4$ and $n \ge k^k$. The last inequality holds for any $c \ge 7$.
\end{proof}
\section{Lower bound on the uniform line metric}\label{sec:lb}
This section is dedicated to proving \Cref{thm:lb}, restated here for convenience.
\lb*

Due to the inductive nature of the proofs, we consider a generalization of the uniform line metric. 
\begin{definition}
Let $n \ge 1$ and $1 \le p \le n$ be arbitrary integers. A $(p, n)$ line metric is a set of $p$ points on $[0,1]$ such that for every $0 \le i \le n-1$, the interval $[i/n, (i+1)/n)$ contains at most one point.
\end{definition}

The uniform line metric with $n$ points is an $(n,n)$ line metric. The proof of \Cref{thm:lb} follows from \Cref{st:lb-1} for $k=1$, \Cref{st:lb-2} for $k=2$, and \Cref{st:lb-k} for $k \ge 3$.
\begin{lemma}\label{st:lb-1}
For every $n$ and $p$ such that $1 \le p \le n$, let $M$ be an arbitrary $(p,n)$ line metric. Then, any spanner with hop-diameter $1$ for $M$ has lightness at least $W_1(p,n) \ge  \frac{1}{64}\left(\frac{p^2}{n}\right)^2$.
\end{lemma}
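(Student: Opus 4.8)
The plan is to bound the lightness of a hop-diameter-$1$ spanner for a $(p,n)$ line metric from below. First I would observe that a hop-diameter-$1$ spanner must contain \emph{every} edge $\binom{X}{2}$: if some pair $u,v$ were missing, there would be no $1$-hop path between them, contradicting the hop-diameter bound. So the weight of the spanner is exactly $\sum_{\{u,v\}\subseteq X}\delta_X(u,v)$, and since the MST of a line metric on $X\subseteq[0,1]$ has weight at most $1$ (it is just the path along the line, total length $\le 1$), the lightness is at least this sum of all pairwise distances. Thus it suffices to lower bound $\sum_{\{u,v\}}\delta_X(u,v)$ by $\tfrac{1}{64}(p^2/n)^2$.

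Next I would exploit the spreading constraint from the definition of a $(p,n)$ line metric: each length-$1/n$ interval $[i/n,(i+1)/n)$ holds at most one point, so the $p$ points are ``spread out'' on $[0,1]$. Sort the points as $x_1<x_2<\dots<x_p$. The constraint forces $x_{j}-x_{i}\ge (j-i-?)/n$-type separations; more simply, between $x_i$ and $x_j$ there must be at least $j-i$ distinct occupied cells, so $x_j-x_i \ge (j-i)/n$ — actually one gets $x_j - x_i \ge (j-i-1)/n$ in the worst alignment, but a cleaner bound is that a block of $m$ consecutive points spans at least $(m-1)/n$. To make the sum large, I would count pairs whose index-gap is at least $p/2$: there are at least $(p/2)^2/2 = p^2/8$ such pairs (pairs $(i,j)$ with $j-i\ge p/2$), and each contributes distance at least $(p/2-1)/n \ge p/(4n)$ once $p\ge 2$. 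That already gives $\sum \delta_X(u,v) \ge \tfrac{p^2}{8}\cdot\tfrac{p}{4n} = \tfrac{p^3}{32n}$, which is not quite the target form $(p^2/n)^2 = p^4/n^2$.

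To get the exponent right I would instead pair this up with the fact that distances themselves, not just index gaps, must be large: since $p$ points occupy $p$ of the $n$ cells, by a counting/averaging argument a constant fraction of the points lie in the ``right half'' region at coordinate $\ge p/(4n)$ relative to the leftmost quarter of points, so for $\ge p^2/8$ pairs the \emph{distance} is $\ge p/(4n)$; summing gives $\ge (p^2/8)(p/(4n)) $. Hmm — the honest route to $p^4/n^2$ is: restrict attention to pairs $(x_i,x_j)$ with $i\le p/4$ and $j\ge 3p/4$; there are $\ge (p/4)(p/4) = p^2/16$ such pairs, and for each, since the $p/2$ points indexed between them occupy $\ge p/2$ cells, we have $x_j - x_i \ge (p/2-1)/n \ge p/(4n)$; so the sum is $\ge (p^2/16)\cdot(p/(4n)) = p^3/(64n)$. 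This is still $p^3/n$, not $p^4/n^2$ — so I must be under-using the spread. The resolution: additionally $x_i \le (\text{number of cells left of it})/n$ is an \emph{upper} bound, not lower; the lower bound on a \emph{distance} $x_j-x_i$ when $j-i=\Delta$ is only $(\Delta-?)/n$. So the true extremal configuration packs all $p$ points into the first $p/n$ fraction of $[0,1]$, giving sum of pairwise distances $\approx p^2\cdot(p/n)/\text{const} = p^3/(\text{const}\cdot n)$, which \emph{contradicts} the claimed $p^4/n^2$ unless $p^4/n^2 \le p^3/n$, i.e. $p\le n$ — which does hold! Indeed $p^4/n^2 = (p/n)\cdot p^3/n \le p^3/n$. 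So the claimed bound $\tfrac{1}{64}(p^2/n)^2$ is the \emph{weaker}, correct target, and the computation $\sum\delta_X \ge \tfrac{p^3}{64n} \ge \tfrac{1}{64}\cdot\tfrac{p^2}{n}\cdot\tfrac{p^2}{n} = \tfrac{1}{64}(p^2/n)^2$ using $p\le n$ finishes it.

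The main obstacle I anticipate is getting the constant $1/64$ and the index-counting bookkeeping exactly right — specifically being careful that ``a block of $m$ consecutive sorted points spans $\ge (m-1)/n$'' (so I should use $p/2-1$ and assume $p\ge 4$, handling tiny $p$ separately where the bound is trivial since lightness $\ge 0$ and $(p^2/n)^2/64$ is small), and making sure the pair count $p^2/16$ and the separation $p/(4n)$ multiply through cleanly. Everything else — that hop-$1$ forces the complete graph, that the MST weight is $\le 1$ — is immediate. So the write-up reduces to: (1) complete-graph observation; (2) MST $\le 1$; (3) sorted-points separation lemma; (4) count long pairs and multiply; (5) invoke $p \le n$ to convert $p^3/n$ into $(p^2/n)^2$.
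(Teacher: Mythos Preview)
Your proposal is correct and, despite the exploratory detours, lands on exactly the paper's argument: split the sorted points into four consecutive quarters, count the $(p/4)^2$ pairs between the first and last quarter, lower-bound each such distance by $\Theta(p/n)$ via the one-point-per-cell constraint, obtain total weight $\ge p^3/(64n)$ (hence lightness, since the MST has weight $\le 1$), and finally invoke $p\le n$ to weaken this to $\tfrac{1}{64}(p^2/n)^2$. The paper's proof is this same computation compressed into three lines, so there is nothing substantively different to compare.
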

\begin{proof}
Partition $M$ into four consecutive parts, $M_1$, $M_2$, $M_3$, and $M_4$, each consisting of $p/4$ points. The distance between a point in $M_1$ and a point in $M_4$ is at least $\frac{p}{4}\cdot\frac{1}{n}$ and there is at least $(p/4)^2$ such pairs. Since every pair requires a direct edge, the total weight these edges incur is at least $\frac{p}{4n}\cdot \frac{p^2}{16}\ge \frac{1}{64}\left(\frac{p^2}{n}\right)^2$. 
\end{proof}

\begin{lemma}\label{st:lb-2}
For every $n$ and $p$ such that $1 \le p \le n$, let $M$ be an arbitrary $(p,n)$ line metric. Then, any spanner with hop-diameter $2$ for $M$ has lightness at least $W_2(p,n) \ge \frac{1}{16}\cdot \frac{p^2}{n}$.
\end{lemma}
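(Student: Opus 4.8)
The plan is to mimic the structure of \Cref{st:lb-1} but with one extra layer corresponding to the single intermediate (hub) vertex of a $2$-hop path. Fix an arbitrary $(p,n)$ line metric $M$ and an arbitrary spanner $H$ with hop-diameter $2$. Partition $M$ into four consecutive blocks $M_1,M_2,M_3,M_4$, each with $p/4$ points, in the order they appear along the line. For any $u\in M_1$ and $v\in M_4$, a $2$-hop $H$-path from $u$ to $v$ is either a direct edge $(u,v)$ or a path $u\!-\!w\!-\!v$ through some hub $w$; in either case the path has total weight at least $\delta_M(u,v)\ge \tfrac{p}{4n}$ (since a spanner, being a subgraph of $M_X$, has every edge at least as heavy as the corresponding metric distance, and the triangle inequality gives the same bound for the two-edge path). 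The idea is to charge, for each such pair $(u,v)$, one of the two edges on its $2$-hop path, and to argue that no single edge gets charged by too many pairs, so the total weight is large.

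The key step is the counting. For a fixed pair $(u,v)$ with $u\in M_1, v\in M_4$, pick any $2$-hop path and look at its ``longer half'': at least one of the (one or two) edges on the path has weight at least $\tfrac12\delta_M(u,v)\ge \tfrac{p}{8n}$; charge that edge to the pair. Now I need: each edge $e=(a,b)\in H$ is charged by at most $O(p)$ pairs. An edge $(u,v)$ of the metric (a direct spanner edge) is charged only by the single pair $\{u,v\}$. An edge $(a,b)$ used as half of a path $u\!-\!w\!-\!v$ is charged by a pair only if $\{a,b\}=\{u,w\}$ or $\{a,b\}=\{w,v\}$; fixing $e=(a,b)$, one endpoint plays the role of $u$ (or $v$) and the other plays the role of the hub $w$, and then the opposite endpoint $v$ (resp. $u$) ranges over at most $|M_4|\le p/4$ (resp. $|M_1|\le p/4$) choices. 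Summing over the two ways of orienting $e$ and the two ends of the path, each edge is charged by at most $4\cdot\tfrac{p}{4}=p$ pairs. Hence, writing $W$ for the total weight of $H$,
\begin{align*}
W \;\ge\; \frac{1}{p}\sum_{\{u,v\}:\,u\in M_1,\,v\in M_4}\frac{\delta_M(u,v)}{2}
\;\ge\; \frac{1}{p}\cdot\Big(\frac{p}{4}\Big)^{2}\cdot\frac{1}{2}\cdot\frac{p}{4n}
\;=\;\frac{p^{2}}{128\,n}.
\end{align*}
Dividing by the weight of the MST of $M$, which is at most $1$ (the points lie in $[0,1]$ and an MST of points on a line is just the sum of consecutive gaps, hence $\le 1$), shows the lightness is at least $\tfrac{1}{128}\cdot\tfrac{p^2}{n}$, and in particular at least $\tfrac{1}{16}\cdot\tfrac{p^2}{n}$ after adjusting the partition constants (e.g.\ choosing fewer, larger outer blocks) — I will tune the block sizes so the stated constant $\tfrac{1}{16}$ comes out cleanly.

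The main obstacle I anticipate is getting the charging bookkeeping exactly right so that the per-edge charge is genuinely $O(p)$ rather than $O(p^2)$: the subtlety is that the hub $w$ of a $2$-hop path is not pinned down in advance, so when I fix an edge $e$ I must be careful that exactly one endpoint of $e$ is free to vary over an $M_1$- or $M_4$-block while the other is determined, and that I only count each pair once even though it may have several valid $2$-hop paths (resolved by fixing, for each pair, a single canonical $2$-hop path at the outset). A secondary point to handle cleanly is the degenerate case where the ``longer half'' edge is the direct edge $(u,v)$ itself — this only makes the charging sparser, so it is harmless, but it should be mentioned. Everything else is a routine computation of the same flavor as \Cref{st:lb-1}.
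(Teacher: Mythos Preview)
Your charging argument is sound and correctly yields a lower bound of order $p^2/n$ on the total weight of $H$ (hence on the lightness, since the MST of points in $[0,1]$ has weight at most $1$). However, the constant you actually derive is $1/128$, and the sentence ``in particular at least $\tfrac{1}{16}\cdot\tfrac{p^2}{n}$ after adjusting the partition constants'' is backwards: $1/128<1/16$, so nothing follows. The promised tuning is never carried out, and even with the tightest bookkeeping your scheme allows (the $M_1$--$M_4$ gap is at least $p/(2n)$, and each edge is charged by at most $p/2$ pairs rather than $p$), the charging method gives only $p^2/(32n)$. So you have proved the right asymptotics but not the lemma as stated.

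The paper uses the same four-block partition but replaces the amortized charging by a short two-case argument that hits $1/16$ exactly. Either every point of $M_1$ already has an edge into $M_3\cup M_4$ --- giving $p/4$ distinct edges (one per $M_1$-endpoint) each of weight at least $p/(4n)$ --- or some fixed $a\in M_1$ has no such edge, in which case for \emph{every} $b\in M_4$ the $2$-hop path from $a$ must first stay inside $M_1\cup M_2$, forcing the second edge to be incident to $b$ and to have weight at least $p/(4n)$; these edges are distinct because their $M_4$-endpoints are. The idea you are missing is to anchor one endpoint so that the long edges are automatically in bijection with a block, which eliminates the multiplicity loss that your pair-by-pair charging incurs.
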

\begin{proof}
Partition $M$ into four consecutive parts, $M_1$, $M_2$, $M_3$, and $M_4$, each consisting of $p/4$ points. Consider two complementary cases. First, if every point in $M_1$ is incident on an edge that goes to either $M_3$ or $M_4$, the total weight of these edges is at least $\frac{p}{4}\cdot \frac{p}{4n}$. In the complementary case, there is a point $a$ in $M_1$ that is not incident on any edge that goes to $M_3$ or $M_4$. Consider an arbitrary point $b$ in $M_4$. A 2-hop path between $a$ and $b$ has to have the first edge between $a$ and a point in $M_1 \cup M_2$ and the second edge between $M_1 \cup M_2$ and $b$. The weight of the second edge is at least $p/(4n)$. Since every point in $M_4$ induces a different edge, the total weight is at least $\frac{p}{4}\cdot \frac{p}{4n}$. (See \Cref{fig:lb-2} for an illustration.) In conclusion, both of the cases require total weight of $\frac{p}{n}\cdot\frac{p}{16}$ and the lower bound follows.
\end{proof}
\begin{figure}        
\includegraphics[width=\textwidth]{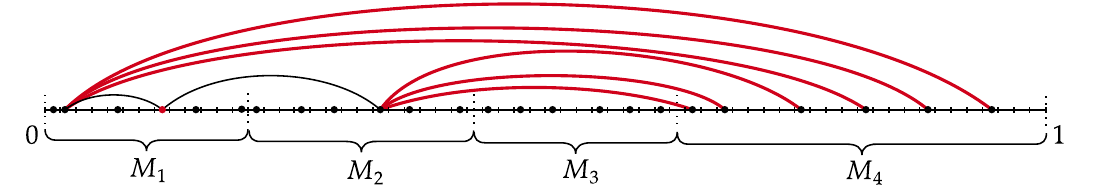}
\caption{An illustration of the proof of the lower bound for $k=2$ (\Cref{st:lb-2}). There is a vertex $p \in M_1$ (highlighted in red) which is not incident on any edge going to $M_3$ or $M_4$. For every point $q \in M_4$, a $2$-hop path from $p$ to $q$ induces a long edge, highlighted in red.}\label{fig:lb-2}
\end{figure}
\begin{lemma}\label{st:lb-k}
For every $n$ and $p$ such that $1 \le p \le n$, let $M$ be an arbitrary $(p,n)$ line metric. Then, for any $k\ge 3$, any spanner with hop-diameter $k$ for $M$ has lightness at least $W_k(p,n) \ge ck\left(\frac{p^2}{n}\right)^{2/k}$, for $c = 1/73728$.
\end{lemma}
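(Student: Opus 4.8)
The plan is to mirror the recursive structure of the upper bound (Algorithm \ref{alg:ub}) on the lower-bound side: I would set up a recursion $W_k(p,n)$ and argue that a hop-$k$ spanner for a $(p,n)$ line metric either already pays a lot on ``long'' edges, or else it induces, after a suitable contraction/partition, a hop-$(k-2)$ spanner on a smaller sub-instance that is itself of the form $(p',n')$. First I would fix the recursion parameter. In analogy with $\ell = \lfloor 2n^{2/k}\rfloor$ in the construction, here the natural choice is to partition $M$ into $g \approx (n/p^2)\cdot\text{(something)}$ — more precisely, into blocks so that each block is a $(p',n')$ line metric with $p'/n' \approx$ a target value and there are about $(p^2/n)^{2/k}$-many blocks; the exact split should be chosen so that the $2\ell L$-type term, the recursion-on-blocks term, and the hop-$(k-2)$-on-representatives term all come out to the same order $k(p^2/n)^{2/k}$.

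\smallskip
\noindent\textbf{Main steps, in order.}
\begin{enumerate}
\item \emph{Base cases.} Use Lemma \ref{st:lb-1} and Lemma \ref{st:lb-2} as the $k=1,2$ anchors of the induction; these already have the right form $W_1 \ge \Omega((p^2/n)^2)$ and $W_2 \ge \Omega(p^2/n)$, i.e. $\Omega(k(p^2/n)^{2/k})$ for $k=1,2$. Handle the small-$p$ regime (say $p \le$ some function of $k$) separately, where the bound $ck(p^2/n)^{2/k}$ is dominated by a single long edge or is trivially $\le L = \Theta(p/n)$.
\item \emph{Partition.} Split $M$ into $g$ consecutive sub-metrics $M_1,\dots,M_g$ of (roughly) equal cardinality $p/g$, each living in a sub-interval of $[0,1]$ of length about $1/g$ (after rescaling, $M_j$ is a $(p/g, n')$ line metric with $n' \approx n/g$, so the ``density'' $p'/n' \approx p/n$ is preserved).
\item \emph{Dichotomy on each block.} For each $M_j$, either (a) every point of $M_j$ is incident to a spanner edge leaving a constant-fraction-sized window around $M_j$ — contributing weight $\Omega((p/g)\cdot(1/g))$ per block, hence $\Omega(p/g)$ total from long ``boundary'' edges — or (b) there is a point in $M_j$ all of whose incident edges stay local, which (as in the $k=2$ argument) forces, for each far-away point $q$, a ``long'' second-from-the-end edge of weight $\Omega(1/g)$; summing over the $\Omega(p)$ far points gives $\Omega(p/g)$ again. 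In parallel, whatever edges the spanner uses \emph{inside} $M_j$ to connect its own points must (by the induction hypothesis, since $M_j$ is a $(p/g,n/g)$ instance and any hop-$k$ spanner restricted to it must still $k$-hop-connect its points) weigh at least $W_k(p/g, n/g) \ge c k ((p/g)^2/(n/g))^{2/k} = ck(p^2/(ng))^{2/k}$; sum over $g$ blocks.
\item \emph{Contraction to representatives.} Pick one representative $r_j \in M_j$ per block and observe that a hop-$k$ path in $H$ between $r_i$ and $r_j$ decomposes, using the previous step's structural fact, into (i) at most one initial edge inside $M_i$, (ii) at most one final edge inside $M_j$, and (iii) a path of at most $k-2$ hops among the ``global'' edges — so the contracted/sparsified graph on $\{r_1,\dots,r_g\}$ is a hop-$(k-2)$ spanner of a $(g, \Theta(g))$-type line metric (the $r_j$'s are $\approx 1/g$ apart). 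By induction this costs $\ge c(k-2)(g^2/g)^{2/(k-2)} = c(k-2)g^{2/(k-2)}$.
\item \emph{Balance $g$ and conclude.} Add the three contributions $\frac{p}{g}$, $g\cdot ck(p^2/(ng))^{2/k}$, and $c(k-2)g^{2/(k-2)}$, and choose $g \approx (p^2/n)^{\,2/k\cdot(k/(k-2))}$-ish so that all three are $\Omega(k(p^2/n)^{2/k})$; verify the constant $c = 1/73728$ closes the induction (the factor-of-a-few losses at each of the two nested recursion steps, together with the $1/64$ from Lemma \ref{st:lb-1} / $1/16$ from Lemma \ref{st:lb-2}, accumulate into this explicit constant).
\end{enumerate}

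\smallskip
\noindent\textbf{The main obstacle.} The delicate part is Step 4 — turning a genuine low-hop spanner on the full point set into a legitimate hop-$(k-2)$ spanner on the representatives while (i) not double-counting edge weight already charged in Step 3's within-block bound, and (ii) guaranteeing the representative instance is again of the line-metric form so the induction applies cleanly. The clean way is to argue that for a path between representatives of two different blocks, only $O(1)$ of its edges can be ``internal'' to any block (the first and last), so after deleting all internal-to-some-block edges the remaining ``global'' edges still connect all representatives within $k-2$ hops; this is exactly the $k\mapsto k-2$ drop that matches the $O(n^{2/k})$ exponent. Getting the partition size $g$ right so that the density $p/n$ is preserved down the recursion, and so that the ``long boundary edge'' lower bound (Step 3, which is essentially the $k=1$/$k=2$ arguments applied at scale $1/g$) and the two recursive terms all balance, is where the bookkeeping — and the specific value of $c$ — lives. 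I also expect to need a mild case analysis ($p$ small vs.\ large relative to $\mathrm{poly}(k)$, and $n/p$ small vs.\ large) to make the rescaled sub-instances genuinely satisfy $1 \le p' \le n'$ with integer parameters, analogous to the $n \le 2k^2$ branches in the upper bound.
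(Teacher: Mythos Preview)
Your proposal captures the right high-level idea --- reduce hop-$k$ to hop-$(k-2)$ via a block partition and representatives --- but it has two genuine gaps that the paper's argument avoids.

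\textbf{First, Steps 3 and 4 do not go through as stated.} The within-block recursion in Step~3 is invalid: the restriction of $H_k$ to a block $M_j$ need not be a hop-$k$ spanner for $M_j$, since $k$-hop paths between two points of $M_j$ may leave $M_j$ and return. More seriously, the decomposition in Step~4 (``only the first and last edges of a path between representatives can be intra-block'') is simply false for arbitrary representatives; a $k$-hop path may zigzag between blocks. The paper fixes this by choosing representatives to be \emph{non-global} points --- points none of whose incident edges leave their block --- which forces the first and last hops to be intra-block and hence yields a genuine $(k-2)$-hop connection between the intermediate points. This is the ``clean way'' you allude to in your obstacle paragraph, but you have not actually set it up: you would need to first establish that enough blocks \emph{contain} a non-global point, and that dichotomy (many global points vs.\ many non-global points) is exactly what drives the proof.

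\textbf{Second, a single partition scale cannot produce the factor of $k$.} Even after fixing Step~4, with a single scale $\ell \approx (p^2/n)^{2/k}$ the ``many global points'' branch only yields a charge of order $\ell = \Theta\bigl((p^2/n)^{2/k}\bigr)$, missing the required factor of $k$; and if you instead take $\ell \approx k\,(p^2/n)^{2/k}$ to get that factor directly, the ``many non-global'' branch loses a $\Theta(\log k)$ in the $(k-2)$-recursion and the induction does not close. The paper resolves this with a \emph{multi-scale} case analysis: it tries scales $\ell_i = \tfrac{32^i}{4e}\,n^{2/k}$ for $i=0,1,\ldots,i'=\lfloor\log_{32}(ek)\rfloor$, tracking the fraction $\gamma_i$ of global points. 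Either $\gamma_i$ drops by at least $4^{-i}$ at some level $i$ (so the newly non-global points, which were global at scale $\ell_{i-1}$, contribute $\Omega(\ell_{i-1}/16^i)$ and the remaining non-global points feed the $(k-2)$-recursion), or $\gamma_{i'}$ is still at least $1/6$ at the final scale $\ell_{i'}=\Theta(k\,n^{2/k})$, in which case the global-edge charge already carries the factor of $k$. Your three-term balance in Step~5 has no analogue of this scale-escalation, and without it the constant cannot close; the specific value $c=1/73728$ in the paper comes precisely from the terminal case $i'.1$.
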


\begin{proof}
Let $H_k$ be an arbitrary spanner with hop-diameter $k$ of $M$. 
Partition $M$ into consecutive intervals, each containing $\ell$ points where the value of $\ell$ will be set later. We call an edge \emph{global} if it has endpoints in two different intervals. We call a point \emph{global} if it is incident on a global edge and \emph{non-global} otherwise. 
\begin{figure}
    \centering
    \includegraphics[width=1.08\textwidth]{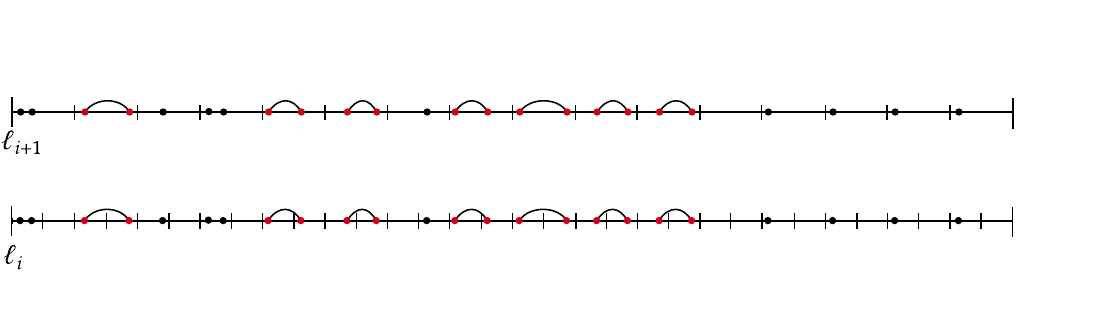}
    \caption{Monotonicity of the number of global points as we increase the interval size from $\ell_i$ to $\ell_{i+1}$. The points highlighted in red were global with respect to $\ell_i$ and became non-global with respect to $\ell_{i+1}$.}
    \label{fig:global}
\end{figure}
\begin{claim}\label{st:global}
If there are $\gamma n$ global points for $\gamma \in [0,1]$, then they contribute at least $\frac{\gamma^2\ell }{16}$ to the total lightness of $H_k$.
\end{claim}
\begin{proof}
Consider an arbitrary interval and recall that it has length $\ell/n$ and consists of $\ell$ points. The interval contains at most $ \frac{\gamma}{4}\cdot \ell$ points that are at distance at most $\frac{\gamma}{4} \cdot \frac{\ell}{n}$ from the left border of the interval. Hence, there are at most $ 2\cdot\frac{\gamma}{4}\cdot \ell$ points that are at distance at most $\frac{\gamma}{4} \cdot \frac{\ell}{n}$ from either of the interval borders. Summing over all the $n/\ell$ intervals, we conclude that there is at most $2\cdot\frac{\gamma}{4}\cdot \ell \cdot \frac{n}{\ell} = \frac{\gamma n}{2}$ points that are at distance at most $\frac{\gamma \ell}{4n}$ from the adjacent interval. The other $\frac{\gamma n}{2}$ points are at distance at least $\frac{\gamma \ell}{4n}$. These points induce edges of total weight of at least $\frac{1}{2}\cdot\frac{\gamma n}{2}\cdot\frac{\gamma \ell}{4n}$, where the factor $1/2$ comes from the fact that each edge might be counted twice.
\end{proof}
Let $\alpha_0 = \frac{1}{4e}$ and let $\ell_0 = \alpha_0n^{2/k}$. We distinguish between two cases. Let $\gamma_0$ be the fraction of global points.\\

\noindent\textit{Case 0.1: $\gamma_0 > \frac{1}{2}$.} Proceed with case analysis below for $i=1$.\\

\noindent\textit{Case 0.2: $\gamma_0 \le \frac{1}{2}$.} In this case, the fraction of the non-global points is $1-\gamma_0 \ge 1/2$. We construct a new line metric, $M'$ by taking a non-global point from every interval containing a non-global point. There are at $n'= \frac{n}{\ell_0}$ intervals and at least $p'=(1-\gamma_0)\frac{n}{\ell_0}$ of them contain a non-global point. (We ignore the rounding issues for simplicity of exposition.) Consider an interval $A$ containing a non-global point $a$ and an interval $B$ containing a non-global point $b$. Any $k$-hop path between $a$ and $b$ has to have the first edge inside of the interval $A$, towards some point $a'$ and the last edge inside of interval $B$ towards some point $b'$. Hence, $a'$ and $b'$ have to be connected via a $(k-2)$-hop path. This is true for any pair of non-global points in $M'$. We use the induction hypothesis for $k-2$ to lower bound the total weight of $H_k$.

\begin{align*}
W_{k-2}(p', n') &\ge c(k-2)\left(\frac{\left(\frac{n}{2\ell_0}\right)^2}{\frac{n}{\ell_0}} \right)^{\frac{2}{k-2}}\\
&\ge c(k-2)\left(\frac{n}{4\alpha_0 n^{2/k}} \right)^{\frac{2}{k-2}}\\
&\ge c(k-2)\left(\frac{1}{4\alpha_0}\right)^{\frac{2}{k-2}}n^{2/k}\\
&\ge c(k-2)e^{\frac{2}{k-2}}n^{2/k}\\
&\ge ckn^{2/k}\\
&\ge ck\left(\frac{p^2}{n}\right)^{2/k}
\end{align*}

The following case analysis consists of cases $i$.1 and $i$.2 for $1 \le i \le \floor{\log_{32}(ek)}=i'$. We let $\alpha_i = \frac{32^i}{4e}$ and $\ell_i = \alpha_i \cdot n^{2/k}$. We use $\gamma_i$ to denote the number of global points with respect to $\ell_i$.
(See Figure~\ref{fig:global} for an illustration of the monotonicity of the number of global points.)\\

\noindent\textit{Case $i$.1: $\gamma_i > \gamma_{i-1}-\frac{1}{4^{i}}$.} Proceed with $i+1$.\\

\noindent\textit{Case $i$.2: $\gamma_i \le \gamma_{i-1}-\frac{1}{4^{i}}$.} We have that $\gamma_{i-1}-\gamma_i \ge \frac{1}{4^{i}}$. This means that $\gamma =\gamma_{i-1}-\gamma_i$ fraction of the points are global with respect to $\ell_{i-1}$ and are not global with respect to $\ell_i$. Their total contribution, by \Cref{st:global}, is at least $\gamma^2\ell_{i-1}/16$. Similarly to Case $0$.2 above, we employ the induction hypothesis for $(k-2)$ to lower bound the contribution of the non-global points. We construct a new line metric, $M'$ by taking a non-global point from every interval containing a non-global point. There are at $n'= \frac{n}{\ell_i}$ intervals and at least $p'=(1-\gamma_i)\frac{n}{\ell_i}$ of them contain a non-global point. Interconnecting the points in $M'$ contributes at least $W_{k-2}(p',n')$ to the total weight of $H_k$. The total weight of $H_k$ can be lower bounded as follows.
\begin{align*}
\allowdisplaybreaks
\frac{\gamma^2\ell_{i-1}}{16}+W_{k-2}(p',n') &\ge 
\left(\frac{1}{4^i}\right)^2\cdot \frac{\ell_{i-1}}{16}+c(k-2)\left(\frac{(p')^2}{n'}\right)^{\frac{2}{k-2}}\\
&\ge \left(\frac{1}{4^i}\right)^2\cdot \frac{\frac{32^{i-1}}{4e}n^{2/k}}{16}+c(k-2)\left((1-\gamma_i)^2\cdot \frac{n}{\ell_i}\right)^{\frac{2}{k-2}}\\
&\ge \frac{8^{i-1}n^{2/k}}{256e}+c(k-2)\left(\frac{1}{4}\cdot \frac{n}{\alpha_i n^{2/k}} \right)^{\frac{2}{k-2}}\\
&\ge \frac{8^{i-1}n^{2/k}}{256e}+c(k-2)\left( \frac{e}{32^i}\right)^{\frac{2}{k-2}}n^{2/k}\\
&\ge n^{2/k}\left(\frac{8^i}{256e} + c(k-2)\left(1+\frac{2}{k-2}\ln{\frac{e}{32^i}}\right) \right)\\
&\ge n^{2/k}\left(\frac{8^i}{256e} +c(k-2) + 2c \ln{\frac{e}{32^i}}\right)\\
&\ge ckn^{2/k} &\text{ for } c = 1/73728\\
\end{align*}
Using that $p \le n$, we have $ckn^{2/k} \ge ck\left(p^2/n\right)^{2/k}$, as required.\\

Finally, we consider the cases $i'$.1 and $i'$.2, where $i'=\floor{\log_{32}(ek)}$.\\

\noindent\textit{Case $i'$.1: $\gamma_{i'} > \gamma_{i'-1}-\frac{1}{4^{i'}}$.}
Observe that $\gamma_{i'}\ge \frac{1}{2} - \frac{1}{3}=\frac{1}{6}$, since $\gamma_0 \ge \frac{1}{2}$ and for every $1 \le i \le i'$ it holds that $\gamma_i > \gamma_{i-1}-\frac{1}{4^{i}}$. By \Cref{st:global}, the contribution of the global points is at least $\gamma_{i'}^2\ell_{i'}/16$, which can be lower bounded as follows.

\begin{gather*}
\frac{\gamma_{i'}^2\ell_{i'}}{16}\ge 
\frac{\ell_{i'}}{16\cdot36}=
\frac{\alpha_{i'}}{16\cdot 36}\cdot n^{2/k}\ge
\frac{32^{i'}}{16\cdot 36\cdot 4e}\cdot n^{2/k}\ge\\
\frac{32^{\log_{32}(ek)-1}}{16\cdot 36\cdot 4e}\cdot n^{2/k}\ge
\frac{1}{73728}\cdot k n^{2/k}\ge
ck\left(\frac{p^2}{n}\right)^{2/k}
\end{gather*}\\

\noindent\textit{Case $i'$.2: $\gamma_i' \le \gamma_{i'-1}-\frac{1}{4^{i'}}$.} Same as case $i$.2 above.

\end{proof}
\section{From light tree covers to light spanners}\label{sec:reduction}
We start this section by proving \Cref{cor:reduction}, restated here for convenience.
\reduction*
\begin{proof}
We construct the spanner $H_X$ for $M_X$ as follows. Let $\mathcal{T}$ be the $L$-light $(\gamma, t)$-tree cover for $M_X$ as in the statement and let $T$ be a tree in $\mathcal{T}$. From \Cref{thm:ub}, we know that the metric $M_T$ induced by $T$ has a 1-spanner with hop-diameter $k$ and lightness $O(kn^{2/k})$ with respect to $M_T$. Since $M_T$ has lightness $L$ with respect to $M_X$, the lightness of $H_T$ with respect to $M_X$ is $O(L\cdot kn^{2/k})$. 
Spanner $H_X$ is obtained as the union of $H_T$ for all $T$ in $\mathcal{T}$. The lightness of $H_X$ is $O(\gamma L\cdot kn^{2/k})$, because there are $\gamma$ trees in $\mathcal{T}$ and each tree has lightness $O(L\cdot kn^{2/k})$ with respect to $M_X$. Consider two arbitrary points $u$ and $v$ in $M_X$. Since $\mathcal{T}$ is a tree cover with stretch $t$, there is a tree $T$ in $\mathcal{T}$ such that $\delta_T(u,v) \le t \cdot \delta_X(u,v)$. By construction, $H_T$ is a $1$-spanner with hop-diameter $k$, so there is a $k$-hop path in $H_T$ (and hence in $H_X$) with length at most $t \cdot \delta_X(u,v)$. The corollary follows.
\end{proof}

Next, we use the following construction of light tree cover from \cite{CCLST25}.
\begin{theorem}[Cf. Theorem 1.2 in \cite{CCLST25}]\label{st:light-doubling}
Given a point set $P$ in a metric of constant doubling dimension $d$ and any parameter
$\eps \in (0,1)$, there exists an $\eps^{-O(d)}$-light $O(\eps^{-O(d)}, 1+\eps)$-tree cover for $P$.
\end{theorem}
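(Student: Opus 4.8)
The plan is to build the tree cover from a \emph{hierarchical net structure} on $P$ and then to split its long-range connections among a small palette of colors, one tree per color. Fix the base $2$ and, for each integer scale $i$, let $N_i$ be a $2^i$-net of $P$ with $N_{i+1}\subseteq N_i$; keeping only the $O(|P|)$ scales at which the nets change yields a (compressed) net tree $\mathcal{T}$ in which the level-$i$ copy of a point is attached to a level-$(i{+}1)$ net point at distance $O(2^i)$. The classical ``ascend, cross, descend'' routing shows that if, in addition, the structure contains for suitable pairs $u,v\in N_i$ with $\delta(u,v)=\Theta(\eps^{-1}2^i)$ a \emph{cross edge} $(u,v)$, then any $x,y\in P$ admit a path of stretch $1+O(\eps)$ that rises $O(\eps\,\delta(x,y))$ above $x$, uses one cross edge at scale $\approx \eps\,\delta(x,y)$, and descends $O(\eps\,\delta(x,y))$ to $y$. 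By the packing property of doubling metrics each net point is incident to only $\eps^{-O(d)}$ such cross edges at its scale, so the cross edges can be properly colored with a palette of size $\gamma=\eps^{-O(d)}$. Tree $T_c$ is then obtained from the union of the net-tree backbone with the color-$c$ cross edges by the standard ``dumbbell-tree'' surgery of Arya et al.\ (and its doubling-metric analogue of Bartal et al.): contract, reattach, and re-expand so that the result is an honest tree, the total weight grows by at most a constant factor, and the $(1+O(\eps))$ route survives for every pair whose responsible cross edge has color $c$ (for the other pairs the backbone, which lies in every $T_c$, already suffices).

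The heart of the matter, and the only part genuinely beyond existing tree-cover constructions, is \emph{lightness}: each $T_c$ must have weight $\eps^{-O(d)}\cdot\MST(P)$. A plain net tree will not do, since its lightness already grows unboundedly on the uniform line metric; instead one takes a \emph{net-respecting, greedily added} set of cross edges, inserting $(u,v)$ only when the current backbone distance between $u$ and $v$ exceeds $(1+\eps)\delta(u,v)$. The total weight of these cross edges over \emph{all} colors is then bounded by the light-spanner analysis for doubling metrics --- the leader/local-MST charging of Borradaile--Le--Wulff-Nilsen, refined by Le--Solomon --- which gives $\eps^{-O(d)}\MST(P)$; hence any single color class weighs $\eps^{-O(d)}\MST(P)$. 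The remaining weight in $T_c$ is that of the ``descent paths'' that make it spanning, and here one uses that each point descends through $\mathcal{T}$ along MST edges of its own cluster, so the per-scale descent weight forms a geometric series and totals $O(\MST(P))$. Summing, $\mathrm{wt}(T_c)\le O(\MST(P))+\eps^{-O(d)}\MST(P)=\eps^{-O(d)}\MST(P)$.

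Assembling: $\{T_c\}_{c\in[\gamma]}$ consists of $\gamma=\eps^{-O(d)}$ trees, each dominating (every edge carries its true distance and the surgery only shortcuts), each $\eps^{-O(d)}$-light, and for every $u,v\in P$ the color of the cross edge responsible for the $u$--$v$ route picks out a tree with $\delta_{T_c}(u,v)\le(1+O(\eps))\delta(u,v)$; rescaling $\eps$ by a constant gives stretch exactly $1+\eps$. The step I expect to fight hardest is reconciling the two demands on the backbone: the light-spanner charging wants it net-respecting and greedily sparse, while the dumbbell surgery that turns ``backbone plus one color'' into a genuine tree wants enough laminar slack to reroute pairs without inflating stretch --- producing one object that is simultaneously light, hierarchically laminar, and surgery-friendly is the real technical work, and is presumably where \cite{CCLST25} spends most of its effort.
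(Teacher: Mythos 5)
The paper does not prove this statement at all: it is imported verbatim as a black box (Theorem 1.2 of \cite{CCLST25}), so there is no internal proof to compare against, and your sketch has to stand on its own as a proof of the cited result. It does not. You have reproduced the standard roadmap (hierarchical nets, cross edges at scale $\Theta(\eps^{-1}2^i)$, packing-bound coloring into $\eps^{-O(d)}$ classes, dumbbell-style surgery per color class), but the two places where the theorem is actually hard are left as assertions. First, the lightness of the shared backbone: you correctly observe that a plain net tree is not light (its weight is $\Omega(\MST\cdot\log\Delta)$ in general, already unbounded on the uniform line metric), but your fix --- ``each point descends through $\mathcal{T}$ along MST edges of its own cluster, so the per-scale descent weight forms a geometric series and totals $O(\MST(P))$'' --- is not justified; descent paths at scale $2^i$ need not follow MST edges, and nothing in the sketch makes the per-scale contributions geometrically decreasing. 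Getting a spanning, laminar, surgery-compatible backbone of weight $\eps^{-O(d)}\MST$ is precisely the content of the cited theorem, and you explicitly defer it (``presumably where \cite{CCLST25} spends most of its effort''), which concedes the proof is incomplete.

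Second, the coloring step silently assumes that every pair $x,y$ is ``served'' by a \emph{single} cross edge whose color then selects the tree. That property holds for WSPD/dumbbell-type constructions, where each well-separated pair contributes one designated edge, but it fails for the greedily-added, net-respecting cross edges you invoke to get lightness: in a greedy light spanner the $(1+\eps)$-path between $u$ and $v$ may traverse many cross edges of different colors, so no single color class is guaranteed to contain a low-stretch $u$--$v$ route after surgery. This is exactly the tension you name at the end --- lightness via greedy/leader charging versus per-pair single-edge responsibility via a WSPD-like hierarchy --- and resolving it is the theorem, not a detail one may assume. As written, the proposal is a plausible outline of the known techniques with the genuinely new ingredient missing, so it cannot be accepted as a proof of \Cref{st:light-doubling}; for the purposes of this paper the statement should simply be used as a citation, as the authors do.
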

\Cref{cor:reduction} and \Cref{st:light-doubling}  immediately imply the following theorem.

\doubling*

\bibliographystyle{alpha}
\bibliography{ref}

\end{document}